\definecolor{darkgreen}{rgb}{0,0.5,0}
\definecolor{darkblue}{rgb}{0,0,0.8}
\newtheorem{theorem}{Theorem}[section]
\newtheorem{lemma}[theorem]{Lemma}
\newtheorem{corollary}[theorem]{Corollary}
\newtheorem{observation}{Observation}
\theoremstyle{definition}
\newtheorem{definition}[theorem]{Definition}
\theoremstyle{remark}
\newtheorem{remark}[theorem]{Remark}
\Crefname{definition}{Definition}{Definitions}
\Crefname{lemma}{Lemma}{Lemmas}
\Crefname{claim}{claim}{Claims}
\Crefname{corollary}{Corollary}{Corollaries}
\Crefname{remark}{Remark}{Remarks}
\Crefname{observation}{Observation}{Observations}
\newcounter{algorithmicH}
\let\oldalgorithmic\algorithmic
\renewcommand{\algorithmic}{%
  \stepcounter{algorithmicH}
  \oldalgorithmic}
\renewcommand{\theHALG@line}{ALG@line.\thealgorithmicH.\arabic{ALG@line}}
\algnewcommand\algorithmicvariable{\textbf{Variables:}}
\algnewcommand\Var{\item[\algorithmicvariable]}
\NewDocumentCommand{\ceil}{s O{} m}{%
    \IfBooleanTF{#1}
    {\left\lceil#3\right\rceil} 
    {#2\lceil#3#2\rceil} 
}
\NewDocumentCommand{\floor}{s O{} m}{%
    \IfBooleanTF{#1}
    {\left\lfloor#3\right\rfloor} 
    {#2\lfloor#3#2\rfloor} 
}
\newcommand{\abs}[1]{\left| #1 \right|}
\newcommand{\id}{\text{ID}}
\newcommand{\myparagraph}[1]{\par\noindent \textit{ #1}}
\newcommand{\rank}{\ensuremath{ \ell}}
\newcommand{\degree}{\ensuremath{ \Delta}}
\newcommand{\congestmodel}{\ensuremath{\mathsf{CONGEST}}\xspace}
\newcommand{\localmodel}{\ensuremath{\mathsf{LOCAL}}\xspace}
\newcommand{\logstar}{\log^{*}}
\newcommand{\poly}{\mathrm{poly}}
\newcommand{\polylog}{\poly\log}
\newcommand{\dist}{\text{dist}}
\newcommand{\diversity}{\mathcal{D}}
\newcommand{\activeNodes}{\mathcal{A}}
\begin{document}

\begin{flushleft}

\vspace*{2cm}
{\huge\bf Deterministic Distributed Ruling Sets of \\ Line Graphs\footnote{A preliminary version of this paper appeared in
    the 25$\mathrm{th}$  International Colloquium on Structural Information and Communication Complexity (SIROCCO 2018) \cite{SIROCCO2018}.}\par}
\vspace{2cm}

\newcommand{\auth}[3]{\textbf{#1}$\,\,\,\cdot\,\,\,$#2$\,\,\,\cdot\,\,\,$#3\par\medskip}

\auth{Fabian Kuhn\footnote{Supported by ERC Grant No.\ 336495 (ACDC).}}
{University of Freiburg}
{kuhn@cs.uni-freiburg.de}
\auth{Yannic Maus\footnotemark[3]}
{University of Freiburg}
{yannic.maus@cs.uni-freiburg.de}
\auth{Simon Weidner}
{University of Freiburg}
{simon.weidner@cs.uni-freiburg.de}

\end{flushleft}

\vspace{1cm}

\paragraph{Abstract.}
An $(\alpha,\beta)$-ruling set of a graph $G=(V,E)$ is a set
  $R\subseteq V$ such that for any node $v\in V$ there is a node
  $u\in R$ in distance at most $\beta$ from $v$ and such that any two
  nodes in $R$ are at distance at least $\alpha$ from each other. The
  concept of ruling sets can naturally be extended to edges, i.e., a
  subset $F\subseteq E$ is an \emph{$(\alpha,\beta)$-ruling edge set}
  of a graph $G=(V,E)$ if the corresponding nodes form an
  $(\alpha,\beta)$-ruling set in the line graph of $G$.  This paper
  presents a simple deterministic, distributed algorithm, in the
  \congestmodel~ model, for computing $(2,2)$-ruling edge sets in
  $O(\logstar n)$ rounds. Furthermore, we extend the algorithm to
  compute ruling sets of graphs with bounded \emph{diversity}. Roughly
  speaking, the diversity of a graph is the maximum number of maximal
  cliques a vertex belongs to.  We devise $(2,O(\diversity))$-ruling
  sets on graphs with diversity $\diversity$ in
  $O(\diversity+\logstar n)$ rounds. This also implies a fast,
  deterministic $(2,O(\rank))$-ruling edge set algorithm for
  hypergraphs with rank at most \rank.

  Furthermore, we provide a ruling set algorithm for general graphs
  that for any $B\geq 2$ computes an
  $\big(\alpha, \alpha \ceil{\log_B n}\big)$-ruling set in
  $O(\alpha \cdot B \cdot \log_B n)$ rounds in the \congestmodel
  model. The algorithm can be modified to compute a
  $\big(2, \beta \big)$-ruling set in
  $O(\beta \degree^{2/\beta} + \logstar n)$ rounds in the
  \congestmodel model, which matches the currently best known such algorithm
  in the more general \localmodel model.
 \clearpage

\section{Introduction, Motivation \& Related Work}

This paper presents fast and simple deterministic distributed algorithms, in the \congestmodel\ model, for computing ruling sets of graphs, line graphs, line graphs of hypergraphs, and graphs of bounded diversity as introduced in \cite{BEM17}.

\paragraph{The \congestmodel{} Model of Distributed Computing~\cite{peleg2000distributed}.} The graph is abstracted as an $n$-node network $G=(V, E)$ with maximum degree at most $\Delta$. Each node is assumed to have a unique $O(\log n)$-bit ID. Communication happens in synchronous rounds. Per round, each node can send one message of at most $O(\log n)$ bits to each of its neighbors and perform (unbounded) local computations\footnote{All our algorithms only use local computations that require at most polynomial time.}. At the end, each node should know its own part of the output, e.g., whether it belongs to the ruling set or not. The time complexity of an algorithm is the number of rounds it requires to terminate.

\paragraph{Ruling Sets.} A \emph{$\big(\alpha, \beta \big)$-ruling set} of a graph $G=(V,E)$ is a subset $R\subseteq V$ of the nodes such that any two nodes in $R$ are at distance at least $\alpha$ in $G$ and for every node $v\in V\setminus R$, there is a node in $R$ within distance $\beta$ \cite{awerbuch1989network}. That is, $R$ is an independent set in $G^{\alpha-1}$, where $G^r$ denotes the graph with node set $V$ and where two nodes $u,v$ are connected by an edge if $d_G(u,v)\leq r$.  Typically, $\alpha$ is called the \emph{independence parameter} and $r$ the \emph{domination parameter} of the ruling set $R$.  If $\alpha=2$, one often also simply calls $R$ a \emph{$\beta$-ruling set}. The concept of ruling sets can naturally be extended to edges, i.e., a subset $F\subseteq E$ is an \emph{$(\alpha,\beta)$-ruling edge set} of a graph $G=(V,E)$ (or a hypergraph $H=(V,E)$) if the corresponding nodes form an $(\alpha,\beta)$-ruling set in the line graph of $G$ (or $H$). In the present paper, we concentrate on deterministic algorithms for computing ruling sets in the \congestmodel model. We will specifically see that edge ruling sets of graphs and low-rank hypergraphs can be computed particularly efficiently.

\paragraph{The Relevance of Ruling Sets.} 
The distributed computation of ruling sets is a simple and clean symmetry breaking problem. In particular, ruling sets are a generalization of maximal independent sets (MIS), arguably one of the most central and best studied distributed symmetry breaking problems. A $(2,1)$-ruling set is an MIS of $G$ and more generally, a $(r+1,r)$-ruling set is an MIS of $G^r$. For $\beta\geq 1$, a $(2,\beta)$-ruling set of $G$ is therefore a strict relaxation of an MIS of $G$, where the problem becomes weaker with larger values of $\beta$. The parameter $\beta$ thus allows to study a trade-off between the strength of the symmetry breaking requirement and the complexity of computing a ruling set.

Ruling sets have been introduced by Awerbuch, Goldberg, Luby, and Plotkin as a building block to efficiently construct a so-called network decomposition (a partition of a graph into clusters of small diameter together with a coloring of the cluster graph with a small number of colors) \cite{awerbuch1989network}. Since then, ruling sets have been used as a powerful tool in various distributed graph algorithms. Computing ruling sets can often replace computing the more stringent and harder to obtain maximal independent sets. Ruling sets have for example been used in order to compute network decompositions \cite{awerbuch1989network,panconesi95}, graph colorings \cite{panconesi95delta}, maximal independent sets \cite{kuhn05fast}, or shortest paths \cite{henzinger2016deterministic}. Ruling sets are also used as a subroutine to obtain the state-of-the-art randomized distributed algorithms for many of the classic distributed graph problems, such as distributed coloring \cite{chang2017optimal,harris16coloring}, maximal independent set \cite{ghaffari2016improved}, or maximal matching \cite{barenboim2016locality}. These algorithms are based on the so-called graph shattering technique, which was originally introduced by Beck in \cite{beck1991algorithmic}. Using an efficient randomized algorithm, the problem is solved on most of the graph such that the only unsolved remaining parts are components of size at most $\poly(\Delta\cdot\log n)$, where $\Delta$ is the maximum degree of $G$. Using existing ruling set algorithms, one can then further reduce the problems on the remaining components to problems on graphs of size $\polylog n$.

\paragraph{Previous Work on Distributed Ruling Set Algorithms.} 
As mentioned before the first appearance of ruling sets was in the work of Awerbuch, Goldberg, Luby, and Plotkin \cite{awerbuch1989network}, who provided a deterministic distributed algorithm to compute an $(\alpha, O(\alpha \log n))$-ruling set in $O(\alpha \log n)$ rounds.  Their algorithm uses the bit representation of the node IDs to recursively compute ruling sets. For each of the $O(\log n)$ bits of the IDs, the nodes are divided into two parts according to the value of the current bit and ruling sets are computed recursively for the two parts. The two recursively computed ruling sets $S_0$ and $S_1$ are merged to a single ruling set by keeping all nodes in $S_0$ and all nodes in $S_1$ that do not have an $S_0$-node in their $(\alpha-1)$-neighborhood. Note that the algorithm loses an additive $\alpha$ in the domination for each of the $O(\log n)$ recursion levels.

Schneider and Wattenhofer refine the ideas of \cite{awerbuch1989network} to deterministically compute $(2,\beta)$-ruling sets in time $O(\beta \degree^{2/\beta} + \log^* n)$ \cite{schneider2013symmetry}. At the cost of an increased running time, they use a larger branching factor than Awerbuch et al.\ to decrease the recursive depth and thus the domination parameter. Further, for small values of $\beta$, the best known deterministic algorithm requires time $2^{O(\sqrt{\log n})}$, even for $\beta=1$. It is based on first computing a network decomposition using the algorithm of \cite{panconesi95} and to then use this decomposition to compute the ruling set. All these algorithms work in the \localmodel model, where the size of messages is unbounded. In \cite{henzinger2016deterministic}, Henzinger, Krinninger, Nanongkai sketch how the algorithm of \cite{awerbuch1989network} can be adapted to compute a $\big(\alpha, O(\alpha \log n) \big)$-ruling set in $O(\alpha \log n)$ rounds in the \congestmodel model.  Any $(2,\beta)$-ruling set algorithm applied to $G^{\alpha-1}$ implies a $(\alpha,(\alpha-1)\beta)$-ruling set algorithm on $G$, e.g., \cite{schneider2013symmetry} can be used to compute $\big(\alpha, O(\alpha^2\cdot \beta) \big)$-ruling sets in time $O(\alpha^2 \beta \degree^{2 / \beta} + \alpha \cdot \log^* n)$. However, the black box simulation of an algorithm on $G^{\alpha-1}$ heavily relies on the \localmodel model.  

In contrast to the few deterministic ruling set algorithms there are many randomized algorithms for the problem.  
In particular, there is a long history of efficient randomized algorithms for computing an MIS. The famous algorithms by Luby and Alon, Babi, and Itai allow to compute an MIS (and thus a $(2,1)$-ruling set) in time $O(\log n)$ \cite{alon1986fast,luby1986simple}. In \cite{ghaffari2016improved}, Ghaffari improved the randomized running time of computing an MIS to $O(\log \Delta) + 2^{O(\sqrt{\log\log n})}$. There are also more efficient randomized algorithms that directly target the computation of $(2,\beta)$-ruling sets for $\beta>1$. 
Gfeller and Vicari found an algorithm that finds a $(1, O(\log \log \degree))$-ruling set in time $O(\log \log \degree)$ such that the degree in the graph induced by the ruling set nodes is $O(\log^5 n)$ \cite{gfeller2007randomized}. Together with the ruling set algorithms of \cite{awerbuch1989network,schneider2013symmetry}, the algorithm allows to compute a $(2,O(\log\log n))$-ruling set in time $O(\log\log n)$. In
 \cite{barenboim2016locality}, Barenboim et al.\ used the graph shattering technique to compute $(2, \beta)$-ruling set in time $O(\beta \log^{1 / (\beta -1/2)}) + 2^{O({\sqrt{\log \log n}}})$. This was later improved by Ghaffari to compute $(2,\beta)$-ruling sets in $O(\beta \log^{1 / \beta}) + O(2^{\sqrt{\log \log n}}) $ rounds \cite{ghaffari2016improved}.
Kothapalli and Pemmaraju showed how to compute $\big(2,2\big)$-ruling sets in $O(\log^{3/4} n)$ rounds \cite{kothapalli2012super}. The core idea is a randomized sparsification process that reduces the degree while maintaining some domination property. Afterwards the algorithm of \cite{barenboim2016locality} is applied to the sparsified graph.
The same authors presented a randomized algorithm that computes $(2,\beta)$-ruling sets in time $O(\beta \log^{1 / (\beta -1)} n)$ if $\beta\leq \sqrt{\log\log n}$ and in time $O(\sqrt{\log \log n})$ for arbitrary $\beta$ \cite{bisht2014brief}.  
Pai et al.\ showed how to compute $3$-ruling sets in $O( {\log n / \log \log n})$ rounds and $2$-ruling sets in $O(\log(\degree) \cdot (\log n) ^{1/2+\epsilon} + {\log n \ \epsilon \log \log n})$ rounds in the \congestmodel model \cite{pai2017symmetry}. Further, the work deals with the \emph{message complexity} of ruling set algorithms. They provide a $\Omega(n^2)$ lower bound on the message complexity for MIS if nodes have no knowledge about their neighbors  and in contrast present a $2$-ruling set algorithm that uses $O(n\log^2 n)$ messages and runs in $O(\Delta\log n)$ rounds. 

We are not aware of any work that explicitly studies ruling edge sets. However, there is substantial work on computing maximal matchings and maximal hypergraph matchings, i.e., on computing $(2,1)$-ruling edge sets. While for the MIS problem no polylogarithmic-time deterministic algorithm is known, in \cite{hanckowiak98}, Ha\'n\'ckowiak, Karo\'nski and Panconesi showed that a maximal matching can be computed in $O(\log^7 n)$ rounds deterministically.  They improved the algorithm to $O(\log^4 n)$ rounds in \cite{hanckowiak1999faster}. The current best algorithm is by Fischer and it computes a maximal matching in time $O(\log^2 \degree \cdot \log n)$ \cite{fischer2017improved}. Fischer, Ghaffari and Kuhn have recently shown that maximal matchings can even be computed efficiently in low-rank hypergraphs. For hypergraphs of rank at most $r$ (i.e., every hyperedge consists of at most $r$ nodes), they presented a deterministic algorithm to compute hypergraph maximal matching in $O(\log^{O(log r)}\Delta\cdot \log n)$ \cite{FGK17}. Later in \cite{GHK17}, the dependency on the rank was improved; the paper obtains a runtime of $\Delta^{O(r)} + O(r \logstar n)$ to compute a hypergraph maximal matching.  Furthermore, \cite{barenboim2016locality} contains a randomized algorithm that (combined with \cite{fischer2017improved}) computes a maximal matching in $O(\log \degree + \log^3 \log n)$ rounds.

Finally, we note that the $\Omega(\logstar n)$ lower bound of \cite{linial1992locality} that was designed for coloring and MIS on a ring network also holds for computing ruling sets. On a ring, given a $\beta$-ruling (edge) set, an MIS can be computed in time $O(\beta)$. Maximal matchings have a lower bound of $O\big(\sqrt{\log n/\log \log n}\big)$ rounds \cite{kuhn2010local}.

\paragraph{Contributions.} The ruling set algorithms for general graphs by Awerbuch et al. \cite{awerbuch1989network}  and Schneider et al. \cite{schneider2013symmetry} only works in the \localmodel model. In \cite{henzinger2016deterministic} Henzinger, Krinninger, and Nanongkai sketch a variant of the algorithm that achieves a ruling set of the same quality as Awerbuch et al.\, but that also works in the \congestmodel model. 
In \Cref{sec:simpleruling} we provide a formal analysis and a generalization of the algorithm of \cite{henzinger2016deterministic}. Further, slightly beyond \cite{henzinger2016deterministic}, our simple deterministic distributed algorithm also levels Schneider et al.'s work in the \congestmodel model. 
\begin{restatable}{theorem}{mainSimpleGraph}\label{thm:mainSimpleGraph}
  Let $\alpha$ be a positive integer. For any $B\geq 2$ there exists a deterministic distributed algorithm that computes a $\big(\alpha, (\alpha-1) \lceil\log_B n\rceil\big)$-ruling set of $G$ in $O(\alpha \cdot B \cdot \log_B n)$ rounds in the $\congestmodel$ model.
\end{restatable}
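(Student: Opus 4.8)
The plan is to generalize the bit-recursion of Awerbuch et al.\ to a $B$-ary digit iteration and, crucially for \congestmodel, to replace the per-group domination test by a single \emph{group-oblivious} flood. Write each ID in base $B$ as $x_{k-1}\cdots x_0$ with $k:=\lceil\log_B n\rceil$ digits, and maintain a shrinking candidate set, starting from $R_0=V$. In level $i$ (for $i=1,\dots,k$) I process digit $x_{i-1}$ in $B$ sub-steps: all surviving nodes with $x_{i-1}=0$ are kept unconditionally (no communication); then for $j=1,\dots,B-1$ I keep a surviving node $v$ with $x_{i-1}(v)=j$ iff no node already kept in this level lies within distance $\alpha-1$ of $v$. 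The point is that this test is run against \emph{all} currently kept nodes, irrespective of their more-significant digits, so it is realized by one BFS flood of depth $\alpha-1$ that propagates a single ``a kept node is within the remaining budget'' bit. Each flood costs $O(\alpha)$ rounds with $O(\log\alpha)$-bit messages and no congestion, giving $O(\alpha B)$ rounds per level and $O(\alpha\cdot B\cdot\log_B n)$ rounds overall.

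For correctness I would track two invariants. Domination is the easy one: since $R_i\subseteq R_{i-1}$ and a node $v$ is discarded in level $i$ only when some node $w$ that \emph{remains} in $R_i$ sits within distance $\alpha-1$ of it, every node of $R_{i-1}$ has a node of $R_i$ within distance $\alpha-1$. Telescoping from $R_0=V$ yields $\dist(v,R_k)\le k(\alpha-1)=(\alpha-1)\lceil\log_B n\rceil$ for every $v$, the claimed domination parameter.

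The independence invariant is where the digit structure does the work, and it is the part I expect to need the most care. I would prove by induction on $i$ that any two nodes of $R_i$ agreeing on the top digits $x_{k-1},\dots,x_i$ are at distance at least $\alpha$; at level $k$ this prefix is empty, so $R_k$ is independent in $G^{\alpha-1}$. Take $u,v\in R_i$ with this common prefix. If their processed digits differ, say $x_{i-1}(u)<x_{i-1}(v)$, then $u$ was already kept (kept nodes are never removed within a level) when $v$'s sub-step ran, so $v$'s survival forces $\dist(u,v)\ge\alpha$ through the flood. If $x_{i-1}(u)=x_{i-1}(v)$, then $u,v$ in fact agree on the longer prefix $x_{k-1},\dots,x_{i-1}$, hence both lie in $R_{i-1}$ with that common prefix and the inductive hypothesis gives $\dist(u,v)\ge\alpha$; the base case $R_0=V$ is vacuous since a full-length prefix makes the classes singletons.

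The main obstacle is precisely justifying the group-oblivious flood. A faithful simulation of the recursion would test $v$ only against kept nodes sharing its prefix, which in \congestmodel forces flood messages to carry $\Theta(\log n)$-bit prefix tags and can inflate the congestion on an edge by up to $\degree^{\alpha-1}$. My plan sidesteps this: testing against \emph{all} kept nodes can only remove more candidates, which is harmless for independence (a removal never creates a close pair) and harmless for domination (a removed node is dominated within $\alpha-1$ by the very node that removed it)---exactly what the two invariants above make precise. A minor remaining point is that $\lceil\log_B n\rceil$ counts the digits of an $\{0,\dots,n-1\}$ labeling; a polynomially bounded ID range only inflates the digit count, and hence the round count, by a constant factor.
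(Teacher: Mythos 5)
Your proposal is correct and follows essentially the same route as the paper: the paper's Algorithm~\ref{alg:krulingwithcolorsgiven} and \Cref{thm:finalsimple} use exactly this $B$-ary digit iteration with $B-1$ sub-steps per digit, the same group-oblivious test against \emph{all} currently kept nodes with a smaller current digit (implemented as an $(\alpha-1)$-hop flood of a distance counter), the same per-level domination increase of $\alpha-1$, and an independence argument via the lowest differing digit that is equivalent to your prefix induction. The only cosmetic difference is that the paper states the lemma for an arbitrary $C$-coloring of $G^{\alpha-1}$ and then instantiates $C$ with the ID space, which also absorbs your closing remark about the ID range.
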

The $n$ in the runtime of \Cref{thm:mainSimpleGraph} stems from the size of the ID-space. To compute a $(2,\beta)$-ruling set it is sufficient to use the colors of a $c\Delta^2$-coloring computed with Linial's algorithm \cite{linial1987distributive} as IDs; setting $B = c\cdot\degree^{2/\beta}$ implies the same trade-off as in \cite{schneider2013symmetry}.

\begin{restatable}{corollary}{mainSimpleGraphDelta}\label{proof:RulingSetArbitrarydomination}
    Let $\beta > 2$ be an integer. There exists a deterministic distributed algorithm that computes a $\big(2, \beta \big)$-ruling set of $G$ in $O(\beta \degree^{2/\beta} + \logstar n)$ rounds in the $\congestmodel$ model.
\end{restatable}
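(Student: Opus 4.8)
The plan is to apply \Cref{thm:mainSimpleGraph} with $\alpha = 2$, but to run it on a much smaller identifier space than the native $O(\log n)$-bit IDs. The guiding observation, already indicated in the remark above, is that the recursion underlying \Cref{thm:mainSimpleGraph} branches on the digits of the node identifiers; consequently both the recursion depth (which drives the domination parameter) and the round complexity scale with the \emph{size of the identifier space} rather than with the number $n$ of nodes. Reading the theorem in this way, if the algorithm is supplied with a proper coloring in place of the IDs---so that any two adjacent nodes hold distinct values---then with branching parameter $B$ it outputs a $\big(2,\lceil \log_B N\rceil\big)$-ruling set in $O(B\cdot \log_B N)$ rounds, where $N$ is the number of colors.

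First I would compute a proper $c\degree^2$-coloring of $G$ with Linial's algorithm \cite{linial1987distributive}, which takes $O(\logstar n)$ rounds in the \congestmodel model and, after padding if necessary, may be assumed to use $N = c\degree^2$ colors with $c\ge 2$. These colors are fed to the algorithm of \Cref{thm:mainSimpleGraph} as identifiers, with $\alpha = 2$ and $B = c\cdot\degree^{2/\beta}$; since $B\ge c\ge 2$, the hypothesis $B\ge 2$ holds. A proper coloring is precisely what the $\alpha=2$ recursion requires: two adjacent nodes differ in some digit and are hence separated at some level, and at the base case each leaf group is a single color class, i.e.\ an independent set, which is a valid $(2,0)$-ruling set of itself. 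The effective identifier space therefore has size $N = c\degree^2$.

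It then remains to substitute $N = c\degree^2$ and $B = c\degree^{2/\beta}$ into the two bounds. Since $B^{\beta} = c^{\beta}\degree^{2}\ge c\degree^{2} = N$, we get $\log_B N\le\beta$, and because $\beta$ is a positive integer this yields $\lceil \log_B N\rceil\le\beta$; thus the computed set is a $(2,\beta)$-ruling set of $G$. For the running time, the same inequality $\log_B N\le\beta$ gives
\[
O\big(\alpha\cdot B\cdot\log_B N\big)=O\big(2\cdot c\degree^{2/\beta}\cdot\beta\big)=O\big(\beta\degree^{2/\beta}\big),
\]
and adding the $O(\logstar n)$ rounds spent on the coloring gives the claimed $O(\beta\degree^{2/\beta}+\logstar n)$ bound.

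The one step I expect to require care is the very first: I must verify that the algorithm of \Cref{thm:mainSimpleGraph} never uses global uniqueness of the IDs but only the local distinctness provided by a proper coloring, and in particular that its base case remains correct when a leaf is an independent set rather than a single node. I anticipate this will follow directly from the structure of the recursion---its merge step only inspects the $(\alpha-1)$-neighborhood of each node, and independence inside a color class makes the leaf a legitimate $(2,0)$-ruling set---so that once this reduction is justified, the remaining arithmetic is immediate.
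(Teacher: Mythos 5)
Your proposal is correct and follows essentially the same route as the paper: compute a $c\degree^2$-coloring with Linial's algorithm in $O(\logstar n)$ rounds, use the colors in place of IDs with branching parameter $B=c\degree^{2/\beta}$, and observe that $B^{\beta}\geq c\degree^2$ gives $\ceil{\log_B(c\degree^2)}\leq\beta$. The one step you flag as needing care is already discharged in the paper, since its underlying \Cref{thm:finalsimple} is stated directly for an arbitrary $C$-coloring of $G^{\alpha-1}$ (of which unique IDs are just a special case), exactly matching your observation that only local distinctness is needed.
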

The simple algorithm of \Cref{thm:mainSimpleGraph} begins with a tentative ruling set $S=V$ and iteratively sparsifies $S$ until it is an independent set. In iteration $i$ of $O(\log_B n)$ iterations it removes nodes from $S$ such that $S$ is independent with regard to the $i$'th digit of the $B$-ary representation of the ids---the set $S$ is called \emph{independent with regard to digit $i$} if both endpoints of each edge of $G[S]$ have the same value at the $i$'th digit. Then, if all bits are independent $S$ is an independent set; the node removal in each iteration is such that the domination increases by at most two in each iteration.

All further contributions center around the fast computation of ruling sets in line graphs and their generalizations. 
The main result is the computation of $2$-ruling edge sets in $O(\logstar n)$ rounds.
\begin{restatable}{theorem}{mainRulingEdgeSet}
\label{lem:fast3rulingedge}
  There exists a deterministic distributed algorithm that computes a $2$-ruling edge set of $G$ in $\Theta(\log^*n)$ rounds in the \congestmodel model.
\end{restatable}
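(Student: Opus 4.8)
The plan is to reduce the problem to computing a maximal matching in a sparse auxiliary graph, exploiting the slack provided by the domination parameter $2$. The starting point is a clean sufficient condition: \emph{any matching $F$ of $G$ whose endpoint set $V(F)$ is a dominating set of $G$ is a $(2,2)$-ruling edge set}. Indeed, a matching is an independent set in the line graph $L(G)$, so the independence requirement ($\alpha=2$) holds automatically. For domination, consider any edge $e=\{u,w\}$. Since $V(F)$ dominates $G$, the endpoint $u$ is either matched or has a matched neighbour $x$. In the first case the matching edge at $u$ shares the endpoint $u$ with $e$, so its line-graph distance to $e$ is at most $1$; in the second case, writing $f\in F$ for the edge matching $x$ and $g=\{u,x\}$, the edges $e,g$ share $u$ and $g,f$ share $x$, so $\dist_{L(G)}(e,f)\le 2$. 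Hence every edge is $2$-dominated.

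First I would build an auxiliary graph in which a maximal matching already yields such a dominating endpoint set. Each non-isolated vertex $v$ selects the incident edge $e_v=\{v,p(v)\}$ to its minimum-ID neighbour $p(v)$, and we set $E'=\{e_v : v\in V\}$ and $G'=(V,E')$. Orienting each $e_v$ from $v$ to $p(v)$ shows that $G'$ has out-degree exactly $1$, i.e., $G'$ is a pseudoforest whose weakly connected components are trees feeding into a single directed cycle. Crucially, if $M$ is a \emph{maximal} matching of $G'$, then $V(M)$ is a vertex cover of $G'$; in particular every selected edge $e_v$ has a matched endpoint, so each vertex $v$ is itself matched or adjacent to the matched vertex $p(v)$. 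Thus $V(M)$ dominates $G$ and, by the observation above, $F:=M$ is a $(2,2)$-ruling edge set. Note that the in-degree of $G'$ may be large, but a high-in-degree vertex is an asset rather than an obstacle: matching it dominates all of its in-neighbours at once.

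It remains to compute a maximal matching of the pseudoforest $G'$ in $O(\logstar n)$ rounds of \congestmodel. Because $G'$ has out-degree $1$ it is a functional graph, and a Cole--Vishkin / Linial-style colour reduction---in which each vertex repeatedly recolours itself by comparing its current colour with that of its unique out-neighbour---computes a proper vertex coloring of $G'$ with $O(1)$ colours in $O(\logstar n)$ rounds; the directed cycles of the components are handled exactly as the ring is in the classical $3$-coloring algorithm, and the procedure is \congestmodel-compatible since each vertex only ever broadcasts its current ($O(\log n)$- and later $O(1)$-bit) colour to its neighbours. Given an $O(1)$-coloring, a maximal matching of $G'$ is obtained in $O(1)$ rounds by processing the colour classes in order: when class $i$ is processed, each still-unmatched vertex of colour $i$ proposes to an unmatched neighbour (say, of smallest ID), each proposed vertex accepts one proposal, and the accepted pairs are added to $M$. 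Properness of the coloring guarantees that the added edges are pairwise disjoint and disjoint from the current $M$, and any vertex left unmatched while having an unmatched neighbour would have proposed, so $M$ is maximal.

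I expect the main obstacle to be this middle step: verifying that the out-degree-$1$ structure of $G'$ alone suffices to run the colour reduction, and that the unbounded in-degree---which rules out a fast edge coloring of $G'$ or a black-box maximal-matching routine---causes no difficulty. Everything else is routine bookkeeping. Finally, the resulting running time cannot be improved, since the $\Omega(\logstar n)$ lower bound of \cite{linial1992locality} for symmetry breaking on the ring transfers to ruling edge sets, giving the matching bound $\Theta(\logstar n)$.
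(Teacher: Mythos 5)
Your route is genuinely different from the paper's. The paper uses a two-step propose-and-accept to extract an edge set $F$ with $\Delta(G[F])\le 2$, runs the $O(\Delta+\logstar n)$ maximal-matching routine on $G[F]$ to get a $3$-ruling edge set, and then applies a separate constant-round $3\to 2$ reduction. You instead keep only the "propose" half (min-ID neighbour selection), accept the unbounded in-degree of the resulting pseudoforest $G'$, and aim for a matching whose endpoints dominate $G$, which gives domination $2$ directly and makes the $3\to 2$ reduction unnecessary. Your opening reduction (a matching with dominating endpoint set is a $(2,2)$-ruling edge set) and the $O(\logstar n)$ constant-colouring of the out-degree-$1$ graph $G'$ are both fine.

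However, the step "any vertex left unmatched while having an unmatched neighbour would have proposed, so $M$ is maximal" is wrong: a proposer can be rejected (its target accepts a competing proposal from another vertex of the same colour class) and, since each vertex proposes only once, it can end the algorithm unmatched while still having an unmatched neighbour. One can build explicit examples with three or four colour classes where an edge of $G'$ ends with both endpoints unmatched, so $M$ is not maximal and $V(M)$ is not a vertex cover of $G'$. Indeed, no $O(\logstar n)$-round algorithm can compute a maximal matching of an unbounded-degree pseudoforest, so this cannot be patched by a better subroutine; this is exactly the reason the paper's second "accept" step forces $\Delta(G[F])\le 2$ before invoking a maximal-matching algorithm. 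Fortunately your argument only needs the weaker property that every non-isolated vertex is matched or has a matched neighbour, and that does hold for your procedure: when an unmatched vertex $v$'s colour class is processed, either all of its $G'$-neighbours (in particular $p(v)$) are already matched, or $v$ proposes to an unmatched neighbour $w$, and $w$, having received at least one proposal while unmatched, accepts some proposal and becomes matched --- so $v$ acquires a matched neighbour even if its own proposal is rejected. Replacing the maximality/vertex-cover step by this direct domination argument makes your proof correct.
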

The main idea of the algorithm can actually be explained in a few lines: In the first step each node sends a proposal along one of its incident edges; in the second step each node that received a proposal adds exactly one of the edges through which it received a proposal to a set $F$; in the third step nodes compute a matching on the graph induced by the edges in $F$ and add matching edges to the ruling edge set. One can show that the graph that is induced by the edges in $F$ has maximum degree at most two and thus the computation of the matching only takes  $O(\logstar n)$ rounds. The resulting ruling edge set is a $3$-ruling edge set and we use our following result to transform it into a $2$-ruling edge set. 

The \emph{proposal technique} is similar the one of Israeli et al. in \cite{israeli1986fast} for the randomized computation of maximal matchings. For multiple phases they first reduce the maximal degree of the graph using a randomized version of the proposal algorithm. Then they randomly add certain edges from the reduced graph to the matching and remove their adjacent edges. They show that the algorithm removes a constant fraction of the edges in each phase and thus they obtain a maximal matching in $O(\log n)$ rounds. 

\begin{restatable}{theorem}{mainRulingEdgeSetReduction}
\label{proof:kto2edgerule}
    Let $\beta \geq 2$. Any $\beta$-ruling edge set can be transformed into a $2$-ruling edge set in $O(\beta)$ rounds of communication in the \congestmodel model.
\end{restatable}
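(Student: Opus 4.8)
The plan is to work entirely in the line graph $L(G)$, where the given $\beta$-ruling edge set $F$ is exactly a matching of $G$ that forms a $(2,\beta)$-ruling set of $L(G)$, and the goal is to produce a matching that $2$-dominates every node of $L(G)$ (i.e.\ every edge of $G$). The first observation is that any computation on $L(G)$ in which a node $e=\{u,v\}$ only needs to exchange information with its $L(G)$-neighbours can be simulated on $G$ with constant overhead, provided the exchanged information is of broadcast type: two edges are adjacent in $L(G)$ precisely when they share an endpoint, so such communication is always between edges incident to a common vertex and can be realized by a single broadcast at that vertex. Using this, I would first run a breadth-first search in $L(G)$ from $F$ and give every edge $e$ a distance label $d(e)\in\{0,1,\dots,\beta\}$ equal to its $L(G)$-distance to the nearest edge of $F$. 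Since $F$ is $\beta$-dominating, all labels are at most $\beta$, and they can be propagated one hop per round by having each vertex forward the current minimum label among its incident edges, so this costs $O(\beta)$ rounds.

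The clean conceptual core is a parity observation: the set $S$ of edges with \emph{even} distance label is a radius-$1$ dominating set of $L(G)$, because every odd-distance edge is $L(G)$-adjacent to its BFS parent (which has even distance) and every even-distance edge dominates itself. Consequently, \emph{any} matching $M\subseteq S$ that $1$-dominates $S$ is automatically a $2$-ruling edge set: each edge of $G$ is within distance $1$ of $S$, and each edge of $S$ is within distance $1$ of $M$, so every edge is within distance $2$ of $M$. To respect the $O(\beta)$ round budget I would realize this extraction incrementally, peeling distance labels from the outside in: maintain a matching that is a $k$-ruling edge set, start from $F$ with $k=\beta$, and reduce $k$ by one per constant number of rounds. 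The only edges violating $(k-1)$-domination are those on the frontier at distance exactly $k$; for these it suffices to add nearby edges (their parents at distance $k-1$) to the matching, and since such an added edge lies at distance at least $k-1\ge 2$ from $F$ it is vertex-disjoint from $F$. One then checks that the radius bookkeeping works out and that iterating down to $k=2$ yields the desired $2$-ruling edge set.

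The hard part is the deterministic symmetry breaking hidden in the phrase ``add a pairwise vertex-disjoint set of frontier edges that still dominates the frontier'': taken at face value this is a maximal-matching computation, which cannot be done in $O(1)$ rounds, so the BFS structure must be exploited rather than matching solved from scratch. I would orient every frontier edge toward its unique BFS parent and use the distance labels together with the node identifiers (and, where useful, the identity of the nearest $F$-edge, which partitions the frontier into clusters) as a pre-computed coordination structure, so that conflicts among candidate edges at a single frontier can be resolved by a constant-round lowest-identifier rule instead of an iterative procedure. Making this precise is the main obstacle: one must argue that the resulting selection is simultaneously a matching and frontier-dominating, and that it can be carried out in $O(1)$ rounds per level in \congestmodel{} \emph{without} congestion when many edges incident to the same vertex participate. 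Once this per-level step is established, the round count $O(\beta)$ follows immediately from the $O(\beta)$-round BFS plus $O(\beta)$ constant-round reduction steps.
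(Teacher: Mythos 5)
Your high-level architecture is the same as the paper's: run $O(\beta)$ constant-round phases, each of which lowers the domination radius by one by adding, one layer inside the current frontier, a set of edges that is vertex-disjoint from everything selected so far and still dominates the frontier within distance two. The problem is that the entire content of the theorem sits in the per-phase selection step, and that is precisely the step you leave open: you call it ``the main obstacle'' and propose to resolve conflicts by a constant-round lowest-identifier rule. As stated this does not work. Having each vertex keep the lowest-ID incident candidate is exactly the kind of one-shot greedy rule that fails to produce a dominating matching in $O(1)$ rounds (this is why maximal matching needs $\Omega(\log^* n)$ rounds even on paths), and nothing in your sketch explains why the BFS labels or the nearest-$F$-edge clusters rescue it. The paper's mechanism uses no identifiers at all. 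It rests on the observation that the distances of two \emph{incident} edges to $R$ differ by at most one, which forces the two endpoints of a candidate edge $e \in E_2$ into distinct roles: the endpoint with an incident $E_3$ edge (the proposer) cannot also have an incident $E_1$ edge, so the other endpoint does (the accepter). Hence two proposers are never joined by an $E_2$ edge, so two candidate edges can only collide at a common accepter, and one round of ``each accepter keeps a single incident candidate'' already yields an independent set; domination at distance $2$ survives even when the accepter keeps a candidate different from the one you proposed, via the path $e,f,g$. That structural dichotomy of the two endpoints of a frontier-adjacent edge is the missing idea, and without it the per-phase step is unproved.

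A secondary issue: you compute the BFS labels once, relative to the initial set $F$, and then peel levels $k=\beta,\beta-1,\dots$. Edges added at level $k$ lie at distance $k-1$ from $F$ while edges added at level $k-1$ lie at distance $k-2$ from $F$; these can share a vertex, so vertex-disjointness ``from $F$'' does not give vertex-disjointness of the union across consecutive phases. The paper avoids this by recomputing, in every phase, the distances to the \emph{current} set $R$ (only up to distance $3$, so still $O(1)$ rounds per phase); every newly added edge is then at distance exactly $2$ from the current set, hence non-adjacent to it, and independence within the new batch is exactly what the propose/accept argument delivers. Your ``radius bookkeeping'' needs this recomputation, or an explicit replacement for it, before the argument closes.
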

We emphasize that a further reduction of the domination parameter, i.e., to $1$-ruling edge sets or equivalently to maximal matchings, cannot be done in less than $O\big(\sqrt{\log n/\log \log n}\big)$ rounds due to the lower bound of \cite{kuhn2010local}.

The algorithm from \Cref{lem:fast3rulingedge} can be seen as a $2$-ruling set algorithm on line graphs  and it is significantly faster than any known algorithm to compute $2$-ruling sets on general graphs.  
Our third contribution extends the ideas of the algorithm for line graphs to a much larger class of graphs, i.e., graphs with bounded diversity. For a graph $G$,  a \emph{clique edge cover} $Q$ is  a collection of cliques of $G$ such that every edge (and node) of $G$ is contained in at least one of the cliques. The \emph{diversity} of a pair $(G,Q)$ where $Q$ is a clique edge cover of $G$ is $\diversity$ if any node is contained in at most $\diversity$ distinct cliques of the clique edge cover. The diversity of a graph $G$ is the minimum diversity of all $(G,Q)$ where $Q$ is an arbitrary clique edge cover of $G$. The concept of diversity was introduced in \cite{BEM17}. In the following we always assume that the clique edge cover $Q$ is known by all nodes, i.e., each node knows all the cliques in which it is contained. Note that for many graphs, e.g., for line graphs of graphs or hypergraphs of small rank a clique edge cover with a small diversity is obtained in a single round of communication by taking a clique for each node $v$ consisting of the set of all the edges containing $v$.  Note that in \cite{BEM17}, the definition of diversity is defined by using maximal cliques. However, we do not require maximality in our algorithms.
\begin{restatable}{theorem}{mainDiversity}
\label{proof:diversityrulingsetresult} 
    There exists a algorithm that, given a graph with a clique edge cover of diversity $\diversity$, computes a $(\diversity+4)$-ruling set in $O(\diversity + \log^*n)$~ rounds in the \congestmodel model.
\end{restatable}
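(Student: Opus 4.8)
The plan is to reduce the task to selecting at most one node from every clique of the cover, since independence is essentially free once we respect the clique structure. Because the clique edge cover $Q$ covers every edge, any two adjacent nodes lie in a common clique; hence a set $R$ that contains at most one node of each clique of $Q$ is automatically an independent set of $G$, i.e.\ it has independence parameter $2$. It therefore suffices to compute a set $R$ with at most one node per clique such that every node has a member of $R$ within distance $\diversity+4$, and the independence requirement comes for free.

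To obtain such an $R$ I would generalize the three-step proposal mechanism behind \Cref{lem:fast3rulingedge}. In a single \emph{phase}, every clique $C$ proposes one of its members $p(C)\in C$, and every clique then selects one member through which it received a proposal and adds it to a tentative representative set $F$. Domination of a single phase is cheap: every node lies in some clique $C$ and is thus within distance $1$ of the proposed node $p(C)$, so the proposed nodes already dominate $V$ within distance $1$, and the selection together with the subsequent conflict resolution keeps the domination within a constant. The sole purpose of the proposal step, exactly as in the line-graph case, is to control how the kept representatives interfere: for diversity $2$ (line graphs and rank-$2$ hypergraphs) each element lies in only two cliques, which forces the interference graph on $F$ to have maximum degree $2$ and lets one finish with a matching in $\Theta(\logstar n)$ rounds. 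For larger diversity a single phase does \emph{not} suffice, because one clique may contain arbitrarily many externally selected representatives; this is the crux of the argument and the reason the domination has to grow with $\diversity$.

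To handle general diversity I would iterate the phase and charge one unit of diversity per iteration. The invariant I aim to maintain is that after $i$ phases every node may \emph{retire} $i$ of the cliques it belongs to, in the sense that each retired clique already owns a representative within the current domination radius; retired cliques are dropped from a node's active cover, so the number of active cliques per node falls from $\diversity$ towards $1$ after at most $\diversity$ phases. Once each node is left with a single active clique, the active cover is a disjoint union of cliques (the diversity-$1$ base case), where picking one representative per clique is immediate and forces the at-most-one-per-clique invariant, so the representatives form an independent set. Each phase costs $O(1)$ rounds in the \congestmodel model and enlarges the domination radius by a constant, the intra-phase symmetry breaking (choosing proposals and resolving the bounded-degree interference, as in \Cref{lem:fast3rulingedge}) contributes a single additive $O(\logstar n)$ term, and the accumulated radius telescopes to $2+O(\diversity)$, which I would tune to the claimed $\diversity+4$; the total running time is then $O(\diversity+\logstar n)$.

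The step I expect to be the main obstacle is making this retirement both \emph{consistent} and \emph{degree-bounded}. I must guarantee that in every phase each still-active clique gets a representative close to \emph{all} of its members (so that all of them may retire it simultaneously), that the newly selected representatives never silently violate the at-most-one-per-clique invariant on the cliques that remain active, and that the selection within a phase is carried out on a structure whose degree depends only on $\diversity$, so the per-phase cost stays $O(1)$ and the symmetry-breaking stays $O(\logstar n)$ overall. Proving that exactly one unit of diversity can be retired per phase—equivalently, that the interference among freshly selected representatives can always be deferred into at most one still-active clique—is the technical heart of the proof and precisely where the additive $O(\diversity)$ loss in the domination parameter is incurred.
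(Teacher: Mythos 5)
There is a genuine gap. Your plan hinges on maintaining the invariant that at most one node per clique of $Q$ survives, via a ``retirement'' argument in which every node drops one of its cliques per phase; you yourself flag the consistency and degree-boundedness of this retirement as the technical heart and leave it unproven. That step is not a detail: nothing in the phase you describe guarantees that a still-active clique acquires a representative close to \emph{all} of its members in a single phase, and without that the count of active cliques per node does not decrease uniformly. Moreover, even granting the retirement, independence is not ``for free'' at the end: two representatives chosen from distinct \emph{active} cliques can be adjacent in $G$ through an edge that is covered only by a clique both of them have already retired, so the at-most-one-per-clique invariant would have to be enforced on retired cliques as well, which a phase-local selection does not do. Finally, the domination accounting does not close: if each of $\diversity$ phases enlarges the radius by a constant $c\ge 2$, you get $c\diversity$, not $\diversity+4$; getting the additive constant right requires each phase to cost exactly $+1$ in domination.

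The paper's proof avoids all of this by aiming for a much weaker sparsification. \Cref{proof:diversityproposal} runs $\diversity$ proposal phases in which each clique proposes a not-yet-proposed active node and a node is deactivated only if it is unproposed but has a proposed (hence still active) neighbor; this gives $+1$ domination per phase and, crucially, only guarantees at most $\diversity$ (not one) surviving nodes per clique, i.e.\ a $(\diversity^2,\diversity)$-vertex-kernel $\activeNodes$. Independence and the final constant in the domination are then delegated entirely to a black box: the generic $(2,4)$-ruling set algorithm of \Cref{proof:RulingSetArbitrarydomination} applied to $G[\activeNodes]$, whose maximum degree is $\diversity^2$, runs in $O\bigl(4\cdot(\diversity^2)^{2/4}+\logstar n\bigr)=O(\diversity+\logstar n)$ rounds, and \Cref{proof:vertexkernelrule} composes the domination parameters to $\diversity+4$. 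If you want to salvage your approach, the cleanest fix is to abandon the one-per-clique target and instead prove the two properties of \Cref{proof:diversityproposal} (at most $\diversity$ survivors per clique, domination growing by one per phase), then finish with the general ruling set algorithm on the resulting bounded-degree subgraph.
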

Line graphs have diversity two and (non-uniform) hypergraphs with rank \rank~have diversity \rank. The corresponding clique edge covers can be computed in a single round  which implies the following corollary.
\begin{restatable}{corollary}{mainHypergraph} \label{cor:mainHypergraph}
There exists a algorithm that computes  $(\rank+4)$-ruling edge sets in $O(\rank + \log^*n)$ rounds in (non uniform) hypergraphs with rank at most \rank.
\end{restatable}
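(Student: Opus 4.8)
The plan is to reduce the hypergraph edge-ruling-set problem directly to the graph ruling-set problem of \Cref{proof:diversityrulingsetresult} via the line graph together with a suitable clique edge cover. Let $H=(V,E)$ be a hypergraph of rank at most $\rank$, so that every hyperedge contains at most $\rank$ vertices. By definition, a $(\rank+4)$-ruling edge set of $H$ is a $(\rank+4)$-ruling set of the line graph $L(H)$, whose nodes are the hyperedges of $H$ and in which two hyperedges are adjacent iff they share at least one vertex of $V$. Since each node of the network already knows the (at most $\rank$) vertices of its corresponding hyperedge, adjacency in $L(H)$ between two incident hyperedges can be decided, and the required communication pattern on $L(H)$ can be simulated, in $O(1)$ rounds of \congestmodel on $H$ (two hyperedges are neighbors in $L(H)$ precisely when they are incident to a common vertex, and such incidences are exchanged locally).

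First I would exhibit a clique edge cover $Q$ of $L(H)$ of diversity at most $\rank$. For each vertex $x\in V$, let $K_x$ be the set of all hyperedges of $H$ containing $x$; any two such hyperedges share the vertex $x$ and are therefore pairwise adjacent in $L(H)$, so $K_x$ is a clique of $L(H)$. The collection $Q=\{K_x : x\in V\}$ covers every edge of $L(H)$, because any edge of $L(H)$ arises from two hyperedges sharing some common vertex $x$, whence both lie in $K_x$; it also covers every node of $L(H)$, since each hyperedge contains at least one vertex. Finally, a node of $L(H)$, i.e.\ a hyperedge $e$, lies in exactly those cliques $K_x$ with $x\in e$, and there are at most $\rank$ of these because $H$ has rank at most $\rank$. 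Hence the diversity of $(L(H),Q)$ is at most $\rank$. Crucially, this cover is known locally: the node for hyperedge $e$ knows precisely the cliques $\{K_x : x\in e\}$ it belongs to after $O(1)$ rounds, matching the standing assumption of \Cref{proof:diversityrulingsetresult}.

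With $L(H)$ and its clique edge cover of diversity $\rank$ in hand, I would simply invoke \Cref{proof:diversityrulingsetresult} on $L(H)$ to compute a $(\rank+4)$-ruling set of $L(H)$ in $O(\rank+\logstar n)$ rounds, which by definition is a $(\rank+4)$-ruling edge set of $H$. The only point requiring care is that \Cref{proof:diversityrulingsetresult} is an algorithm \emph{on} $L(H)$, while we are given the communication network $H$; I would argue that each round of the $L(H)$-algorithm can be emulated with $O(1)$ rounds on $H$, so that the asymptotic round complexity $O(\rank+\logstar n)$ is preserved. The main obstacle is this simulation in the bandwidth-restricted \congestmodel\ model: a single vertex $x$ of high degree in $H$ corresponds to the clique $K_x$ in $L(H)$ whose node set may be large, and routing a message between two hyperedges that are adjacent in $L(H)$ only through their shared vertices must respect the $O(\log n)$ message-size bound. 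I expect the resolution to be that the algorithm of \Cref{proof:diversityrulingsetresult} is designed so that per round each $L(H)$-node communicates only a constant number of $O(\log n)$-bit messages \emph{per clique} it belongs to; since each hyperedge belongs to at most $\rank$ cliques and the intra-clique coordination (e.g.\ selecting a representative or a proposal within $K_x$) is mediated by the single shared vertex $x$ in $O(1)$ rounds, the total bandwidth across each network edge of $H$ stays within $O(\log n)$ bits per round, and no congestion bottleneck arises beyond the stated bound.
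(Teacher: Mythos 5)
Your proposal is correct and follows essentially the same route as the paper: you construct exactly the clique edge cover $\{K_x : x\in V\}$ of the line graph used in \Cref{proof:linegraphofhypergraphdiversity} to bound the diversity by $\rank$, and then invoke \Cref{proof:diversityrulingsetresult}. Your closing concern about simulating the line-graph algorithm in \congestmodel is the one point the paper handles differently --- it simply \emph{defines} the hypergraph communication model so that each vertex broadcasts one $O(\log n)$-bit message per incident hyperedge per round, which makes the per-clique coordination you describe available by assumption.
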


\paragraph{Outline.} \Cref{sec:rulingedgesetsofsimplegraphs} focuses on ruling edge sets. We believe that it is helpful to read this section to understand the more involved algorithm in  \Cref{sec:boundedDiversity} which extends results to graphs of bounded diversity and line graphs of hypergraphs. In \Cref{sec:simpleruling}, we present the ruling set algorithm for general graphs in the \congestmodel model.


\section{Ruling Edge Sets of Simple Graphs}\label{sec:rulingedgesetsofsimplegraphs}
We provide an algorithm to compute ruling edge sets with an asymptotically optimal runtime. Even though the same asymptotic runtime can be obtained with the more general algorithm in \Cref{sec:boundedDiversity} (the algorithm works for graphs of bounded diversity --- line graphs have diversity two) we believe that this section is simpler and more straightforward. This section also helps to understand the more involved algorithm in \Cref{sec:boundedDiversity}. In \Cref{sec:proposalsimplegraphs} we show how to compute $3$-ruling edge sets in $O(\logstar n)$ rounds and in \Cref{sec:rulingedgesetreduction} we show how any $\beta$-ruling edge set can be transformed into a $2$-ruling edge set in $O(\beta)$ rounds.

The distance $\dist(e,f)$ between two edges $e$ and $f$  is defined as the distance of the corresponding nodes in the line graph. The graph $G = (V,E)$ induced by a set of vertices $U \subseteq V$ is defined as $G[U] = (U, \{\{u,v\} \mid \{u,v\} \in E, u,v \in U\}$ and the graph induced by a set of edges as $F \subseteq E$ is $G[F] = (V,F)$.
We extend the definition of ruling (vertex) sets to \emph{ruling edge sets}.  
\begin{definition}[Ruling Edge Set]
  An \emph{$(\alpha, \beta)$-ruling edge set} $ R \subseteq E$ of a graph $G = (V, E)$ is a subset of edges such that the distance between any two edges in $R$ is at least $\alpha$ and for every edge $e\in E$ there is an edge $f\in R$ with $\dist(e,f)\leq \beta$.
\end{definition}
In the \localmodel model the computation of ruling edge sets is equivalent to computing ruling sets on line graphs. Note that line graphs have many additional properties, e.g., bounded diversity (cf. \Cref{sec:boundedDiversity}),  and not every graph can appear as a line graph of another graph. 
We focus on ruling edge sets with distance two between any two 'selected' edges. The next remark indicates why:  any independence greater than two immediately leads to results for ruling vertex sets.
\begin{remark}\label{proof:rulingedgesetstorulingsets}
    Any $\big(\alpha, \beta \big)$-ruling edge set with $\alpha \geq 2$, $\beta \geq 1$ directly leads to a $\big(\alpha -1, \beta +1 \big)$-ruling set.
		\end{remark}
\begin{proof}
    Let $S$ be a $\big( \alpha, \beta \big)$-ruling edge set.  For each edge $\{v_1,v_2\} \in S$ add one of the nodes, e.g., $v_1$ to the node set $R$. Isolated nodes are also added to $R$.
    \myparagraph{Independence.} Let $v_1$ and $v'_1$ be two nodes in $R$. By construction there are two distinct edges $e=\{v_1,v_2\}$ and $e'=\{v'_1,v'_2\}$ in $S$. 
		As the distance between $e$ and $e'$ is at least $\alpha$  the shortest path $p$ that contains both edges $e$ and $e'$ has  at least $\alpha+1$ edges. Hence the distance between $v_1$ and $v_2$ is at least $\alpha-1$. 
	    \myparagraph{Domination.} Let $v\in V$ be a node with incident edge $e=\{v,w\}$. Then there is an edge $f=\{v_f,w_f\}\in S$ with $\dist(e,f)\leq \beta$. Either $v_f$ or $w_f$ is contained in $R$ and thus there is a node in distance $\beta+1$ to $v$.
\end{proof}

Maximal matchings are $\big(2,1 \big)$-ruling edges sets. These can be computed in $\polylog n$ time with \cite{fischer2017improved} or with a large dependency on the maximum degree and only a $O(\logstar n)$ dependence on $n$.
\begin{corollary}[\cite{panconesi2001some}]\label{lem:matchingwithcolors}
	Maximal matchings in graphs with maximum degree at most \degree~ can be computed in $O(\degree+\log^* n)$ deterministic distributed time.
\end{corollary}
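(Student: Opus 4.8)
The plan is to reduce the problem to computing a maximal independent set on the line graph $L(G)$ and to solve that via a proper coloring. A maximal matching of $G$ is exactly an \mis{} of $L(G)$, and $L(G)$ has maximum degree $\hat{\degree}\le 2(\degree-1)=O(\degree)$. Each edge $e=\{u,v\}$ can be hosted at one of its endpoints and assigned the $O(\log n)$-bit identifier obtained by concatenating the IDs of $u$ and $v$; two edges are adjacent in $L(G)$ iff they share an endpoint $w$, and $w$ can relay between them, so one communication round of $L(G)$ is realized in $O(1)$ rounds of $G$ for the (single-color-per-edge) messages the algorithm needs. Thus it suffices to design an \mis{} algorithm on a graph of maximum degree $O(\degree)$ whose identifier space has size $\poly(n)$.

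The main step is to compute a proper coloring of $L(G)$ with $O(\degree)$ colors, i.e.\ a proper edge coloring of $G$. First I would run Linial's algorithm~\cite{linial1987distributive} on $L(G)$ to obtain an $O(\hat{\degree}^2)=O(\degree^2)$-coloring; since the edge identifiers lie in a range of size $\poly(n)$, this costs $O(\logstar n)$ rounds. Then I would reduce the palette from $O(\degree^2)$ down to $O(\degree)$ colors using the standard iterative color-reduction (as in \cite{panconesi2001some}), which keeps the coloring proper while eliminating one surplus color class at a time by recoloring it into a free color of a size-$O(\degree)$ palette. This reduction is the technical heart and the main obstacle: performed naively it would cost $\Theta(\degree^2)$ rounds, so the point is to organize the merging of the color classes so that the reduction finishes in $O(\degree)$ rounds, which is exactly what the cited coloring machinery provides.

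Given an $O(\degree)$-edge-coloring, the matching itself is built greedily and the analysis is routine. Maintaining a partial matching $M$ (initially empty), I would process the color classes $1,2,\dots,O(\degree)$ in order; in phase $i$, every edge of color $i$ whose two endpoints are both still unmatched is added to $M$. Because a single color class is itself a matching, the edges added in one phase have pairwise disjoint endpoints, so $M$ stays a matching throughout; and since every edge is examined in the phase of its own color, any edge not placed in $M$ must have had an endpoint matched by then, so $M$ is maximal. Each phase only requires the two endpoints of an edge to exchange their matched/unmatched status, so a phase costs $O(1)$ rounds of $G$ and all phases together cost $O(\degree)$ rounds.

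Summing the three parts gives $O(\logstar n)+O(\degree)+O(\degree)=O(\degree+\logstar n)$ rounds, as claimed. The only delicate points are the $O(\degree)$-round color reduction discussed above and verifying that the line-graph simulation stays within the $O(\log n)$-bit message budget of \congestmodel{}; the greedy phase is immediate once the coloring is in hand.
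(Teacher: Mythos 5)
The paper gives no proof of this corollary---it is a black-box citation of \cite{panconesi2001some}---so the comparison is against the algorithm in that reference. Your first and last stages are fine: hosting each edge at an endpoint, and the greedy sweep over the color classes of an $O(\degree)$-edge-coloring, which correctly yields a maximal matching in $O(\degree)$ additional rounds. The gap is the middle step, which you yourself flag as ``the technical heart'': turning Linial's $O(\degree^2)$-coloring of the line graph into an $O(\degree)$-coloring within $O(\degree)$ rounds. The reduction you actually describe---eliminating one surplus color class at a time by recoloring it into a free color of a size-$O(\degree)$ palette---removes one color per round and therefore costs $\Theta(\degree^2)$ rounds; you offer no mechanism for the claimed speedup beyond asserting that ``the cited coloring machinery provides'' it. This cannot be waved away: for general graphs, reducing $O(\degree^2)$ colors to $O(\degree)$ colors in $O(\degree+\logstar n)$ rounds deterministically is a genuinely hard problem that was only solved (via defective colorings and algebraic techniques) years after \cite{panconesi2001some}, so ``standard iterative color reduction'' does not deliver it. A secondary weakness is the \congestmodel{} simulation of Linial on the line graph: an edge must learn the current colors of all $\Theta(\degree)$ edges incident to its far endpoint, which in the first iteration is $\Theta(\degree\log n)$ bits over a single link, not the ``single color per edge'' traffic you assume.

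The cited algorithm avoids the palette-reduction problem entirely and never runs Linial on the line graph. It first partitions the edges into $\degree$ oriented (pseudo-)forests $F_1,\dots,F_\degree$ in $O(1)$ rounds (each edge is assigned the port number it has at a designated endpoint, e.g.\ the one with larger ID, and oriented toward it, so every node has out-degree at most one in each forest). All forests are then $3$-colored in parallel by Cole--Vishkin in $O(\logstar n)$ rounds, paying the $\logstar n$ term exactly once. Finally the forests are swept sequentially: in phase $i$, a maximal matching of $F_i$ restricted to the still-unmatched vertices is computed in $O(1)$ rounds using the precomputed $3$-coloring (for each color class in turn, unmatched nodes propose to their unmatched parent, which accepts one proposal), for $O(\degree)$ rounds overall. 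Your greedy sweep is the analogue of this last stage, but the symmetry-breaking structure that makes the whole bound $O(\degree+\logstar n)$ work is the forest decomposition, not an $O(\degree)$-edge-coloring obtained by shrinking Linial's palette. To repair your write-up you would either have to substitute the forest-decomposition route or supply an actual $O(\degree+\logstar n)$-round $O(\degree)$-edge-coloring algorithm; as written, the central step is missing.
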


\subsection{Proposal Technique for Simple Graphs}\label{sec:proposalsimplegraphs}
 In the first step of our ruling edge set algorithm we compute, in a constant number of rounds, a subset $F\subseteq E$ of the  edges such that $(1)$ for every edge $e\in E$ there is an edge $f\in F$ such that  the distance between $e$ and $f$ is small and $(2)$ the graph $G[F]$ has small maximum degree. In the second step we apply any (known) ruling edge set algorithm on the edges of $G[F]$, e.g., the algorithm from \Cref{lem:matchingwithcolors}. We call a set $F$ with these properties an \emph{edge-kernel}. 

\begin{definition}[Edge-kernel]\label{def:kernel}
    Let $G = (V,E)$ be a graph. A \emph{$(d, r)$-edge-kernel} $F \subseteq E$ is a subset of edges, so that the degree of the induced graph $G[F]$ is at most $d$ and for every edge $e \in E$ there exists an edge $f \in F$ with $\dist_G(e, f) \leq r$.
\end{definition}
\vspace{-0.7cm}
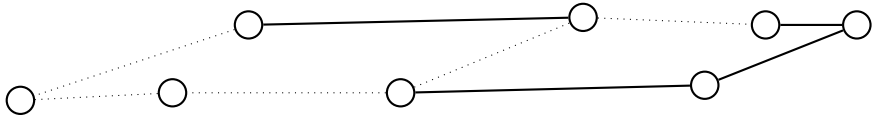
\begin{figure}[H]
  \centering
  \begin{tikzpicture}
    \begin{scope}[every node/.style={circle,thick,draw}]
      \node (A) at (0,0) {};
      \node (B) at (2,-0.9) {};
      \node (C) at (4.4,0.1) {};
      \node (D) at (6.8,0) {};
      \node (E) at (8,0) {};
      \node (F) at (6,-0.8) {};
      \node (G) at (-1,-0.9) {};
      \node (H) at (-3,-1) {};
    \end{scope}

    \begin{scope}[
        every node/.style={fill=white,circle},
            every edge/.style={draw=black,thick}]
          \path [-] (A) edge (C);
          \path [-] (D) edge (E);
          \path [-] (F) edge (E);
          \path [-] (B) edge (F);
    \end{scope}

    \begin{scope}[
        every node/.style={fill=white,circle},
            every edge/.style={draw=black,dotted}]
          \path [-] (B) edge (C);
          \path [-] (C) edge (D);
          \path [-] (B) edge (G);
          \path [-] (G) edge (H);
          \path [-] (A) edge (H);
    \end{scope}
  \end{tikzpicture}
	\vspace{-0.2cm}
  \caption{Non-dotted lines form a $(2, 2)$-edge-kernel.}\label{fig:kernel}
\end{figure}
\vspace{-0.2cm}

The core idea of our algorithm is the proposal technique of the next lemma.
\begin{lemma}\label{lem:proposal}
  There is a deterministic two round \congestmodel algorithm to compute a $(2,2)$-edge-kernel.
\end{lemma}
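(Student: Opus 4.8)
The plan is to describe an explicit proposal protocol and then verify the two defining conditions of a $(2,2)$-edge-kernel independently. First I would have every node $v$ that is incident to at least one edge send a \emph{proposal} along a single incident edge, chosen by a fixed deterministic tie-breaking rule (for instance, toward the neighbor of smallest ID). After this first round every node knows which of its neighbors proposed to it, so in a second round every node $w$ that received at least one proposal \emph{selects} exactly one of the edges through which it was proposed to (again by a fixed rule) and notifies the corresponding endpoint; the kernel $F$ is the set of all selected edges. A proposal or a selection notification is a single signal on an edge, hence fits in $O(\log n)$ bits, and the protocol uses exactly two communication rounds.

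To bound the degree of $G[F]$ I would assign to each edge of $F$ the two roles of \emph{proposer} and \emph{selector}: an edge $\{u,v\}$ lies in $F$ only because one endpoint selected it, and a node can only select an edge along which the other endpoint proposed. Fixing a node $v$, every incident $F$-edge is therefore of one of two kinds --- either $v$ is its selector, or $v$ is its proposer while the other endpoint is the selector. Since $v$ selects at most one incident edge and proposes along at most one incident edge, there is at most one edge of each kind, so the degree of $v$ in $G[F]$ is at most $2$. This degree count is the step that needs the most care, because a single edge may be proposed and selected by both of its endpoints at once; I would note that this coincidence only merges the two kinds into one and hence can only decrease the count, so the bound of two is safe.

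For domination, fix an arbitrary edge $e=\{a,b\}\in E$. The endpoint $a$ has degree at least one, so in the first round it proposes along some edge $\{a,a'\}$, which forces $a'$ to have received a proposal and thus to select some edge $f\in F$ incident to $a'$. Now $e$ and $\{a,a'\}$ share the vertex $a$, while $\{a,a'\}$ and $f$ share the vertex $a'$, so in the line graph each consecutive pair is at distance at most one; the triangle inequality then gives $\dist(e,f)\leq \dist(e,\{a,a'\})+\dist(\{a,a'\},f)\leq 2$, with the degenerate cases (where some of these three edges coincide) only making the distance smaller. Hence every edge of $G$ is within distance two of an edge of $F$, and together with the degree bound this shows that $F$ is a $(2,2)$-edge-kernel produced in two \congestmodel rounds.
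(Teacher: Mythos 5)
Your proof is correct and follows essentially the same route as the paper: the identical propose-then-accept protocol, the same two-role degree count (selected edge plus own accepted proposal), and the same three-edge path $e,\{a,a'\},f$ for domination. Your extra remarks on the degenerate coincidences are a harmless refinement of the paper's argument.
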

\textbf{Algorithm: } Each node proposes one of its incident edges and in the next step each node accepts a single of its incident edges that were proposed by \underline{other nodes}. Return the set $F$ of accepted edges.
\begin{algorithm}[H]
  \caption{Proposal Technique}\label{alg:kernelproposal}
  \begin{algorithmic}[1]
    \For{\textbf{each} node $n$ \textbf{in parallel}}
    \State Propose one incident edge to all neighbor nodes
    \State Arbitrarily add one of the edges that are proposed by neighbors to the set $F$
    \EndFor
		\State \Return $F$
  \end{algorithmic}
\end{algorithm}
\begin{proof}
     The algorithm requires two rounds in the \congestmodel model. We claim is that $F$ is a $(d,r)$-edge-kernel with $d,r\leq 2$.

\myparagraph{$d \leq 2$.} Any node $v \in V$ has at most two incident edges in $F$: the edge that $v$ proposed itself if it was accepted by the corresponding neighbor and the edge $v$ accepted. This concludes that $\degree(G[F]) \leq 2$.

\myparagraph{$r \leq 2$.} Consider any edge $e = \{v, u\} \in E$, the vertex $v$ proposed some edge $f = \{v, w\}$ in the first step of the algorithm. Thus $w$ has at least one incident edge that was proposed by a neighbor. Let $g=\{w,w'\}\in F$ be the edge that is accepted by $w$. 
Then the distance between $f$ and $g$ is at most $2$ through the path $e,f,g$. Note that the distance is even smaller if $v$ proposed edge $e$ or if $w$ accepted edge $f$.
\end{proof}

Computing a ruling edge set on an edge-kernel provides a ruling edge set of the original graph  whose domination parameter is the sum of the domination parameters of the edge-kernel and the ruling edge set.

\begin{lemma}\label{proof:kernelrule}
    Let $d, r_1$ and $r_2$ be positive integers. Given an $(d,r_1)$-edge-kernel $F\subseteq E$ of a graph \mbox{$G=(V,E)$} and an $r_2$-ruling edge set algorithm with runtime $T(n,\text{max degree})$, one can compute an \mbox{$(r_1 + r_2)$}-ruling  edge set of $G$ in time $T(n,d)$.
\end{lemma}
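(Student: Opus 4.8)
The plan is to run the given $r_2$-ruling edge set algorithm directly on the induced subgraph $G[F]=(V,F)$ and to argue that its output $R\subseteq F$ is already an $(r_1+r_2)$-ruling edge set of $G$. Since $F\subseteq E$, every round of communication along $F$-edges is carried out verbatim in $G$ (just ignoring the non-$F$ edges), and because the edge-kernel guarantees that $G[F]$ has maximum degree at most $d$, this single invocation costs $T(n,d)$ rounds; the unique node IDs and the node count are inherited from $G$. It then remains to verify the two defining properties of a $(2,r_1+r_2)$-ruling edge set of $G$.

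For independence, I would exploit that the independence parameter is exactly $2$, i.e.\ that the only forbidden configuration is two selected edges sharing an endpoint. By the algorithm's guarantee, any two distinct edges of $R$ are at distance at least $2$ in $G[F]$, hence non-adjacent there. The key observation is that adjacency of two $F$-edges is unaffected by the presence or absence of other edges: if $e,e'\in R\subseteq F$ shared an endpoint in $G$, they would also share it in $G[F]$ and thus be at distance $1$ in $G[F]$, a contradiction. Therefore no two edges of $R$ share an endpoint in $G$, giving distance at least $2$ in $G$ as required.

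For domination, fix an arbitrary edge $e\in E$. The edge-kernel property yields an $f\in F$ with $\dist(e,f)\le r_1$ in $G$, and the domination guarantee of the ruling edge set algorithm on $G[F]$ yields a $g\in R$ with $\dist_{G[F]}(f,g)\le r_2$. Since $G[F]$ is a subgraph of $G$, any path witnessing a distance in $G[F]$ also exists in $G$, so $\dist(f,g)\le\dist_{G[F]}(f,g)\le r_2$. Combining both bounds via the triangle inequality for the line-graph metric gives $\dist(e,g)\le\dist(e,f)+\dist(f,g)\le r_1+r_2$, so $R$ dominates every edge of $G$ within distance $r_1+r_2$.

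The main (and essentially only) subtlety is the independence transfer: a naive appeal to subgraph-distance monotonicity would bound distances in $G$ from \emph{above} by distances in $G[F]$, which is the wrong direction for an independence (lower-bound) claim, and indeed for independence parameter larger than $2$ the transfer would genuinely fail because $G$ may contain shortcuts absent from $G[F]$. The fix, and the reason the lemma is stated for $\beta$-ruling edge sets, is that distance-$1$ adjacency is preserved \emph{exactly} between $G$ and $G[F]$, since the endpoints of an edge do not change under edge deletion. Everything else is a routine combination of the triangle inequality and subgraph-distance monotonicity.
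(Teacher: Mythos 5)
Your proposal is correct and follows essentially the same route as the paper's proof: run the given algorithm on $G[F]$, transfer independence via the fact that edge adjacency is preserved exactly between $G$ and $G[F]$, and obtain domination by the triangle inequality combined with $\dist_G \le \dist_{G[F]}$. Your closing remark on why the argument is confined to independence parameter $2$ matches the paper's own discussion following \Cref{lem:fast3rulingedge}.
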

\noindent\textbf{Algorithm:} Apply the $r_2$ ruling edge set algorithm on the graph $G[F]$; let $R\subseteq F$ be its output.
\begin{proof} 
    \myparagraph{$R$ is independent on $G$:} Let $e$ and $f$ be two edges in $R\subseteq F$. They are not adjacent in $G[F]$ by the guarantees of the algorithm. Further, if they were adjacent in $G$ then they would, by the definition of the induced graph, also be adjacent in $G[F]$, a contradiction.
		
    \myparagraph{$R$ is $r_1+r_2$ dominating on $G$:} Let $e\in E$ be an arbitrary edge of $G$. Due to the edge-kernel properties there is an edge $f\in F$ with $\dist_G(e,f)\leq r_1$. As $R$ is an $r_2$-ruling edge set in $G[F]$ there is an edge $g\in R$ with $\dist_{G[F]}(f,g)\leq r_2$. This implies that $\dist_G(e,g)\leq \dist_G(e,f)+\dist_G(f,g)\leq r_1+r_2$. 
	\end{proof}

The bottleneck when computing a maximal matching is the maximum degree (cf. \Cref{lem:matchingwithcolors}). An $(d, r)$-edge-kernel reduces the degree to $d$. By first computing an $(2,2)$-edge-kernel and thereafter running our matching algorithm on it we obtain a $\big(2,3 \big)$-ruling edge set.

\mainRulingEdgeSet*
\begin{proof}
  First compute a $(2,2)$-edge-kernel $F$ with \Cref{lem:proposal} in $O(1)$ rounds. Thereafter run the matching algorithm \Cref{lem:matchingwithcolors} on $G[F]$ in $O(\logstar n+2)=O(\logstar n)$ rounds and return the matching. If we formulate these steps in the language of line graphs to apply \Cref{proof:kernelrule}  we obtain that the returned set is a $3$-ruling edge set of $G$. 

Use \Cref{proof:kto2edgerule} to reduce the domination from $3$ to $2$ in $O(1)$ rounds.
The lower bound of $\Omega(\logstar n)$ follows from Linial's lower bound \cite{linial1992locality} as the line graph of the ring forms an isomorphic ring. 
 \end{proof}
The concept of edge-kernels as introduced in \Cref{def:kernel} is not helpful to compute ruling edge sets with independence parameter $\alpha>2$. Given an edge-kernel $F$, we use that we can handle the connected components of $G[F]$ separately as the distance between connected components is at least two. If one was to compute ruling edge sets with independence $\alpha>2$ one had to ensure that the distance between connected components is at least $\alpha$. Note that \Cref{proof:rulingedgesetstorulingsets} implies that such an algorithm would immediately imply an (unknown) algorithm for the computation of a (non trivial) ruling set of $G$.

\subsection{\texorpdfstring{\boldmath From $\beta$-ruling edge sets to $2$-ruling edges sets}{From b-ruling edge sets to 2-ruling edges sets}}\label{sec:rulingedgesetreduction}
In this subsection we show how to decrease the domination parameter of ruling edge sets from $\beta$ to $2$ within $O(\beta)$ rounds. In particular, we show how to transform  $3$-ruling edge sets into $2$-ruling edge sets in $O(1)$ rounds and essentially repeating the algorithm $\beta$ times leads to the result for general $\beta$.

The core idea is adding additional edges to a $\beta$-ruling edge set $R$ to decrease its domination parameter: Let $E_2$ be the set of edges whose shortest distance to an edge in $R$ is two. We carefully select an independent set $I\subseteq E_2$ such that every edge in distance three to $R$ has an edge in distance at most two in $I$. Then $R\cup I$ forms the desired $2$-ruling edge set.
Note that adding all edges with distance two (or any other fixed distance) to $R$ cannot be done without losing independence. Furthermore the induced graph $G[E_2]$  might have degree up to $\Delta$. Thus, to obtain constant runtime, we cannot apply any of the known algorithms with non-constant runtime in a black box fashion to $G[E_2]$. 
We first need one very simple but also very useful observation.

\begin{observation}\label{obs:rulingedgesetstructure}
The distance of any pair of incident edges to the closest edge in an ruling edge set differs at most by one (cf. \Cref{fig:linerule3}). 
\end{observation}

\begin{lemma}\label{lem:3to2edgerule}
  A $3$-ruling edge set can be transferred into a $2$-ruling edge set in $O(1)$ \congestmodel rounds.
\end{lemma}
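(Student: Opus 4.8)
The plan is to augment the given $3$-ruling edge set $R$ by a carefully chosen set $I$ of edges lying at distance exactly $2$ from $R$, so that $R\cup I$ becomes a $2$-ruling edge set. First I would let every edge learn its distance $d(e)\in\{0,1,2,3\}$ to $R$; since the domination parameter is $3$, this information is available after a constant number of rounds and it partitions $E$ into the layers $E_0=R,E_1,E_2,E_3$ according to $d(e)$. Every edge with $d(e)\le 2$ is already within distance $2$ of $R$, so the only edges that still need to be dominated are those in $E_3$. Hence it suffices to add a set $I\subseteq E_2$ that $(i)$ is independent and non-adjacent to $R$ (the latter is automatic, since $E_2$-edges have distance $2$ to $R$) and $(ii)$ brings every $E_3$-edge to within distance $2$.

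The key structural fact, which follows from \Cref{obs:rulingedgesetstructure}, is that every edge $e\in E_3$ is adjacent to some edge of $E_2$: on a shortest line-graph path from $e$ to $R$ the edge immediately following $e$ has distance exactly $2$ and shares an endpoint with $e$. Equivalently, $e$ has a \emph{good} endpoint, namely one incident to an $E_2$-edge. This suggests a tentative cover $I_0$: each $E_3$-edge designates a good endpoint, and every designated endpoint contributes one of its incident $E_2$-edges to $I_0$. By construction every $E_3$-edge then lies at distance $1$ from $I_0$. The set $I_0$ need not be independent, so the real work is to thin it out to a matching without destroying domination.

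The guiding principle for the thinning is the following: if I pass from $I_0$ to any sub-matching $I\subseteq I_0$ that is \emph{maximal} in $G[I_0]$ --- i.e.\ every edge of $I_0$ either lies in $I$ or is adjacent to an edge of $I$ --- then distance-$2$ domination is preserved. Indeed, let $e\in E_3$ be covered by $c\in I_0$ with $\dist(e,c)=1$. If $c\in I$ we are done; otherwise $c$ is adjacent to some $c'\in I$, and the three-edge path $e,c,c'$, whose edges exist in $G$ regardless of the selection, witnesses $\dist(e,c')\le 2$. Thus it remains only to extract, in constant time, a matching $I\subseteq I_0$ that dominates every edge of $I_0$.

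The main obstacle is precisely this last step: $G[I_0]$ may contain long induced paths of $E_2$-edges, and extracting a maximal matching from an unstructured bounded-degree graph is subject to the $\Omega(\logstar n)$ barrier. What rescues a constant-round bound is that $R$ is \emph{given} and can serve as a coordinate system: in $O(1)$ rounds every edge determines its nearest $R$-edge (at distance at most $3$), so the edges are organised into clusters of constant line-graph radius, each carrying the distinct identifier of its anchoring $R$-edge. Since $R$ dominates within distance $3$, any long $E_2$-path is cut by these anchors into pieces of constant length; within a piece the distance labels orient the relevant $E_2$-edges, and across pieces the anchor identifiers break ties canonically, so $I$ can be selected by a purely local rule of constant radius, sidestepping the symmetry breaking that Linial's bound forbids only in the absence of such scaffolding. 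I expect the delicate part to be arguing that these cluster-anchored local choices glue into a globally consistent matching that still dominates all of $I_0$ (equivalently all of $E_3$) within distance $2$, while respecting the $O(\log n)$-bit message bound of the \congestmodel model.
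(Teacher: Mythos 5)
Your setup (layering $E$ by distance to $R$, selecting candidate $E_2$-edges via good endpoints of $E_3$-edges, and arguing that any independent subset of the candidates that dominates all candidates yields a $2$-ruling edge set) matches the paper's algorithm up to the last step. But the last step is where your proof has a genuine gap: you reduce the problem to extracting, in $O(1)$ rounds, a sub-matching $I\subseteq I_0$ that dominates all of $I_0$, you correctly identify this as the obstacle, and the rescue you sketch does not work. Having each edge learn the identifier of its nearest $R$-edge does not cut long candidate paths into constant-length pieces, and even if every edge of a long path carried a distinct identifier, selecting a maximal matching on that path in $O(1)$ rounds is exactly what Linial's $\Omega(\logstar n)$ bound forbids; distinct labels are not a ``scaffolding'' that circumvents it. So as written, the constant-round claim is unproven.

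The missing idea is structural, and it follows from \Cref{obs:rulingedgesetstructure}: a node incident to an $E_3$-edge cannot be incident to an $E_1$-edge, so two designating (proposing) endpoints can never be joined by an $E_2$-edge --- otherwise that edge would have no endpoint touching $E_1$ and could not be at distance $2$ from $R$. Consequently, whenever two candidate edges of $I_0$ are adjacent, they must meet at a \emph{non-proposing} endpoint (one that does have an incident $E_1$-edge), and no edge of $I_0$ can have both endpoints serve as such meeting points (one endpoint is its proposer). Hence the conflict graph of $I_0$ is a disjoint union of stars centered at the non-proposing endpoints, and your ``maximal matching of $G[I_0]$'' is obtained in a single round by letting each such center accept exactly one incident candidate. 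This is precisely what the paper does; the long induced paths you worry about simply cannot occur in $I_0$ (only in $E_2$ as a whole, which is why one must not work with all of $E_2$). If you add this one claim and replace the anchor-based selection by the one-round acceptance at nodes incident to $E_1$, your argument closes.
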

\noindent\textbf{Algorithm:}
Given a $3$-ruling edge set $R \subseteq E$, we compute a $2$-ruling edge set $R\subseteq S \subseteq E$. 
First, split the edges $E$ into four sets $E = E_0 \cup E_1 \cup \dots \cup E_3$ according to their distance to an edge in $R$. This can be done in three rounds. 
	  Then every node that is adjacent to at least one edge from $E_2$ and at least one edge from $E_3$ selects a single of its incident edges $e_2 \in E_2$ as a \emph{candidate edge}.
    Now, each node with at least one incident candidate edge that also has an incident edge in $E_1$ chooses one of its incident candidate edges and adds it to the set $I$. Finally return the set $S=R\cup I$.
				\vspace{-0.2cm}
  \begin{figure}[H]
    \centering
      \begin{tikzpicture}[scale=0.9]
          \begin{scope}[every node/.style={circle,thick,draw}]
              \node (A) at (0,0) {t};
              \node (A2) at (0, -1.5) {};
              \node (B) at (3,0) {u};
              \node (C) at (6,0) {v};
              \node (D) at (9,0) {};
              \node (D2) at (9, -1.5) {};
          \end{scope}

      \begin{scope}[
          every node/.style={rectangle,above,inner sep=0mm, outer sep=2mm},
              every edge/.style={draw=black,thick}]
            \path [-] (A) edge node[above] (AB) {$1$} (B);
        \path [-, dotted] (A2) edge node[rectangle,below,inner sep=0mm, outer sep=3mm] (A2B) {$ 1 \lor 2 $} (B);
        \path [-] (B) edge node[above] (BC) {$2$} (C);
        \path [-] (C) edge node[above] (CD) {$ 3 $} (D);
        \path [-, dotted] (C) edge node[rectangle,below,inner sep=0mm, outer sep=3mm] (CD2) {$ 2 \lor 3 $} (D2);
      \end{scope}
    \end{tikzpicture}
		\vspace{-0.2cm}
    \caption{Neighborhood of a candidate edge $e=\{u,v\} \in E_2$ proposed by node $v$. Solid edges exist in any graph, dotted edges may exist in any cardinality.}\label{fig:linerule3}
  \end{figure}
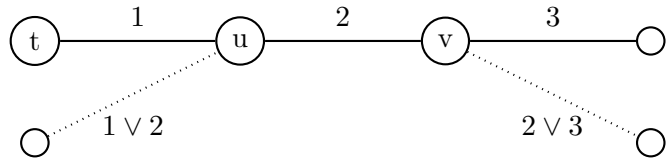
	\vspace{-0.3cm}
\begin{proof}
  \myparagraph{$S$ is independent.} It is helpful to keep the essence of \Cref{obs:rulingedgesetstructure} in mind which implies that the set of nodes that propose an edge cannot be connected by an edge in $E_2$. For contradiction, assume such an edge $e = \{u,v\} \in E_2 $ exists. As both nodes propose an edge they both have incident edges in $E_3$. However, then both nodes do not have an incident edge in $E_1$ which contradicts that $e\in E_2$ (cf. \Cref{fig:linerule3}).
	
	As the set $I$ of added edges is a subset of $E_2$ none of them is adjacent to edge edge in $E_0$. Thus we only need to prove that no two edges in $I$ are adjacent. 
	Assume there are two edges $e = \{u,v\}, f=\{u,w\}$ in $I\subseteq E_2$ that are adjacent. Then $v$ proposed $e$ and $w$ proposed $f$ because if any of the edges would have been proposed by $u$ the other edge could not be proposed at all by our previous observation that proposed nodes cannot be connected by an edge in $E_2$. As the proposals of $e$ and $f$ compete at $u$ and $u$ can only accept one of them not both edges can be contained in $I$.

  \myparagraph{$S$ has domination two.} Edges in $E\setminus E_3$ are still dominated in distance two as $R \subseteq S$. Let $e=\{u,v\}\in E_3$. At least one of its endpoints has an incident edge $f$ in $E_2$ that it proposed. W.l.o.g. assume that $u$ proposed edge $f=\{u,w\}\in E_2$. Either $f$ is accepted by $w$ which implies that $e$ is dominated or $w$ accepted some other edge $g$ which dominates $e$ in distance two. 
  \end{proof}
We use the same idea to improve the domination parameter.
\begin{lemma}\label{lem:ktokminus1edgerule}
    Any $\beta$-ruling edge set with $\beta \geq 3 $ of a simple graph can be transformed into an $\beta-1$-ruling edge set in $O(1)$ rounds of communication in the \congestmodel model. 
\end{lemma}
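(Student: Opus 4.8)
The plan is to push the reduction of \Cref{lem:3to2edgerule} one distance-layer outward. Partition the edges by their distance to $R$ into classes $E_0, E_1, \ldots, E_\beta$; since $R$ is $\beta$-dominating, every edge lies in one of these classes and, crucially, $E_{\beta+1} = \emptyset$. I would then run exactly the two-phase proposal of \Cref{lem:3to2edgerule} under the index shift $E_1 \mapsto E_{\beta-2}$, $E_2 \mapsto E_{\beta-1}$, $E_3 \mapsto E_\beta$: first, every node incident to at least one edge of $E_{\beta-1}$ and at least one edge of $E_\beta$ marks a single incident edge of $E_{\beta-1}$ as a \emph{candidate}; second, every node incident to a candidate edge that is also incident to an edge of $E_{\beta-2}$ accepts one of its incident candidates into a set $I$; finally output $S = R \cup I$. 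Each phase is a single round of communication.

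For independence I would reuse the structure of the proof of \Cref{lem:3to2edgerule} essentially verbatim, replacing indices and invoking \Cref{obs:rulingedgesetstructure} together with $E_{\beta+1}=\emptyset$. Concretely, if a node $v$ proposes, then $v$ has an incident $E_\beta$ edge; since its incident edges differ in distance by at most one and there is no $E_{\beta+1}$, all of $v$'s incident edges lie in $E_{\beta-1}\cup E_\beta$, so $v$ has \emph{no} incident $E_{\beta-2}$ edge. This gives the key structural fact that two proposing nodes cannot be joined by an edge of $E_{\beta-1}$, since such an edge would need an incident $E_{\beta-2}$ edge to witness its distance $\beta-1$ to $R$. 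The same fact rules out two adjacent accepted edges exactly as before: if $e,f\in I$ share a vertex $u$, then $u$ proposed neither (otherwise the opposite endpoint, also a proposer, would be joined to $u$ by an $E_{\beta-1}$ edge), so the opposite endpoints proposed them; those endpoints, being proposers, lack incident $E_{\beta-2}$ edges and therefore cannot accept, forcing both acceptances to occur at $u$, which accepts only one candidate. Independence between $I$ and $R$ is free because $I\subseteq E_{\beta-1}$ and $\beta-1\ge 2$.

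For domination I would note that $R\subseteq S$ already dominates $E_0\cup\cdots\cup E_{\beta-1}$ within $\beta-1$, so only $E_\beta$ edges need new coverage. Given $e\in E_\beta$, some endpoint $u$ is incident to the $E_{\beta-1}$ edge on $e$'s shortest path to $R$, so $u$ proposes some candidate $f=\{u,w\}\in E_{\beta-1}$. Since $u$ has no incident $E_{\beta-2}$ edge, the distance $\beta-1$ of $f$ is witnessed through $w$, so $w$ is incident to an $E_{\beta-2}$ edge and hence accepts some candidate $g$ (possibly $f$ itself). The walk $e,f,g$ shows $\dist(e,g)\le 2\le \beta-1$, so $e$ is dominated by $S$ within $\beta-1$.

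The step I expect to need the most care is the round complexity. The two proposal phases are genuinely $O(1)$, but they presuppose that each node knows the distance class ($E_{\beta-2}$, $E_{\beta-1}$, $E_\beta$, or closer) of every incident edge, i.e.\ distance-to-$R$ information, which from scratch would cost $\Theta(\beta)$ rounds; the $O(1)$ bound is therefore relative to this labeling. This matches how the lemma is consumed in \Cref{proof:kto2edgerule}, where the labeling is computed once and the $\beta-2$ successive single-layer reductions each add only the two $O(1)$ proposal rounds. The genuine obstacle is thus to argue that, after inserting $I$, the distance information required for the next, inner layer is available cheaply enough that the iterated reduction stays within $O(\beta)$ rounds overall, rather than triggering a fresh global computation at each layer.
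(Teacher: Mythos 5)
Your construction works at the \emph{outermost} distance layers $E_{\beta-2},E_{\beta-1},E_{\beta}$, and the independence and domination arguments you give for that variant are essentially sound (the index-shifted use of \Cref{obs:rulingedgesetstructure} together with $E_{\beta+1}=\emptyset$ does go through). But the obstacle you flag at the end is not a technicality --- it is a genuine gap that defeats the claimed bound. To run your two proposal phases, a node must know which of its incident edges lie in $E_{\beta-2}$, $E_{\beta-1}$, or $E_{\beta}$, i.e.\ it must know distances to $R$ up to $\beta$. That is $\Theta(\beta)$-radius information and costs $\Theta(\beta)$ rounds, so your algorithm is not an $O(1)$-round reduction as the lemma states. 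The attempted rescue --- compute the labeling once and amortize over the $\beta-2$ iterations in \Cref{proof:kto2edgerule} --- is also not immediate: after inserting $I$, the distance classes relevant to the next iteration are distances to the \emph{new} set $R\cup I$, which are not simply the old classes shifted by one, and you give no argument that they can be maintained in $O(1)$ rounds per layer. The paper explicitly emphasizes the opposite property: the reduction works in constant time \emph{even if no node knows its distance to the ruling edge set beforehand}.

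The paper's proof resolves exactly this by flipping the construction to the \emph{innermost} layers. Each edge checks, in $O(1)$ rounds, whether its $3$-neighborhood contains an edge of $R$, yielding the classes $E_0,\dots,E_3,E_{\geq 4}$; one then restricts to $H=G[E_{\leq 3}]$, where $R$ is a $3$-ruling edge set, and applies \Cref{lem:3to2edgerule} verbatim to get a $2$-ruling edge set $S$ of $H$. Domination of the far edges is then free: the shortest path in $G$ from any edge at distance $\beta$ to $R$ passes through an edge of $E_3$, which $S$ dominates within $2$, so the far edge is dominated within $\beta-1$. Only constant-radius information about the \emph{current} ruling edge set is ever needed, which is what makes both the single step $O(1)$ and the iteration in \Cref{proof:kto2edgerule} total $O(\beta)$. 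If you want to salvage your outer-layer version, you would have to supply the missing maintenance argument for the distance labeling across iterations; working at the inner layers avoids the issue entirely.
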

\begin{proof}
 Let $R$ be the given $\beta$-ruling edge set. First, for each edge $e \in E$ we check whether its $3$-neighborhood contains an edge of $R$ and split the edges $E$ into four sets according to their distance to $R$, i.e.,  $E = E_0 \cup E_1 \dots E_3 \cup E_{\geq 4}$. Let $H=G[E_{\leq 3}]$ be the subgraph induced by the edges $E_{\leq 3} = E_0  \cup  E_1 \cup E_2 \cup E_3$ and apply \Cref{lem:3to2edgerule} to transform $R$ into a $2$-ruling edge set $S$ of $H$ in $O(1)$ rounds. 

  We observe that this $2$-ruling edge set of $H$ is a $(\beta-1)$-ruling edge set of $G$:  In the graph $G$ the shortest path $p={e_\beta, \dots,e_3,e_2,e_1,e_0}$ from any edge $e_\beta$ (with distance $\beta$ to $R$) to $E_0=R$ contains an edge of $E_3$; the indices of the path edges correspond to their distances to $e_0$. As $e_3$ has an edge at distance at most $2$ in $S$, the edge $e_\beta$ has an edge in $S$ at distance at most $\beta-1$.
\end{proof}
\Cref{lem:ktokminus1edgerule} reduces the domination of a $\beta$-ruling edge set in a constant number of rounds, independent from $\beta$. Particularly, the reduction even works in constant time if no node knows how far it is from the closest ruling edge before the algorithm starts.
Iteratively applying \Cref{lem:ktokminus1edgerule} implies the following theorem.
\mainRulingEdgeSetReduction*
\begin{proof}
  Apply \Cref{lem:ktokminus1edgerule} $\beta-2$ times iteratively reducing $\beta$ to $2$ in $O(\beta)$ rounds. 
\end{proof}


\section{Ruling Sets of Bounded Diversity Graphs}\label{sec:boundedDiversity} 
In \Cref{sec:rulingedgesetsofsimplegraphs}, we have seen that the computation of ruling sets on line graphs seems to be much easier than on general graphs. In this section, we identify graph properties that allow us to essentially apply the same algorithm as in \Cref{sec:rulingedgesetsofsimplegraphs} to a much more general class of graphs, in particular to bounded diversity graphs. Bounded diversity was introduced in \cite{BEM17}. Given a graph $G = (V,E)$ and a \emph{clique edge cover} $Q$, i.e., a  set of cliques  (where each clique is a subgraph of $G$) such that any node of $G$ is contained in at least one clique and for any two nodes $u,v \in V$ that are adjacent in $G$ there exists a clique $C\in Q$ in which $u$ and $v$ are adjacent. The \emph{diversity} with respect to the cover $Q$ is the maximal number of cliques a vertex is contained in. The diversity of a graph is the minimum diversity over all clique edge covers. We show that, given such a cover with diversity $\diversity$, we can compute an $O(\diversity)$-ruling set in time $O(\diversity+\logstar n)$. In many cases, e.g., in line graphs and line graphs of hypergraphs clique edge covers with very low diversity can be computed in constant time even in the \congestmodel model. 
\begin{definition}[Diversity]\label{def:diversity}
Given a graph $G=(V, E)$ and a clique edge cover $Q$, the \emph{diversity} of $(G,Q)$ is defined as $\max_{v \in V} \abs{\{C \in Q \mid v \in C  \}}$. The \emph{diversity} of $G$ is the minimum of the diversities of $(G,Q)$ over all clique edge covers $Q$.
\end{definition}
\Cref{def:diversity} is slightly different from the definition in \cite{BEM17} where the cliques are required to be maximal. However, none of our algorithms use this property and going without it might lead to covers with smaller diversity and hence faster runtimes. 
One downside of both definitions is that (so far) algorithms rely on a globally known cover that, in the best case, levels the diversity of the graph. In both models of computation it is not clear that computing such a cover can always be done efficiently. However, in the \localmodel model it is straightforward to compute a cover with diversity $\Delta$: For each node, add all maximal cliques that it is contained in. 
\begin{figure}[H]

  \centering
                    \tikzstyle{very loosely dotted}=[dash pattern=on \pgflinewidth off 5pt]
                    \centering
                    \begin{tikzpicture}[scale=0.8, rotate=-90]
                      \begin{scope}[every node/.style={circle,thick,draw}]
                        \node (A) at (-1.8,0) {$a$};
                            \node (B) at (1.8,0) {$b$};
                            \node (C1) at (0,2.5) {$c_1$};
                            \node (C2) at (0,5) {$c_2$};
                            \node (C3) at (0,7.5) {$c_3$};
                            \node (CL) at (0,12.5) {$c_{\Delta-1}$};
                      \end{scope}

                        \begin{scope}[every edge/.style={draw=black,thick,very loosely dotted}]
                          \draw (C3) edge (CL);
                        \end{scope}
                        \begin{scope}[
                            every node/.style={fill=white,circle},
                            every edge/.style={draw=black,thick}]
                          \draw (A) edge (B);
                          \draw (A) edge (C1);
                          \draw (A) edge (C2);
                          \draw (A) edge (C3);
                          \draw (B) edge (C1);
                          \draw (B) edge (C2);
                          \draw (B) edge (C3);
                          \path (B)  edge (CL);
                          \path (A)  edge (CL);

                        \end{scope}
                    \end{tikzpicture}

                      \caption{In the line graph of the given graph the edge $ab$ is contained in all of the following maximal cliques
                      $\{ab,ac_1,bc_1\}, \{ab,ac_2,bc_2\},\{ab,ac_3,bc_3\}, \dots \{ab,ac_{\Delta -1}, bc_{\Delta -1}\}$ and using them for a clique cover implies a diversity of $\Delta$. However, all edges can be covered with the two cliques $\{ab, ac_1, ac_2, \dots, ac_{\Delta -1}\}$ and $\{ab, ab_1, ab_2, \dots, ab_{\Delta -1}\}$.}
                    \label{fig:DiversityExample}
                \centering
      \label{fig:diversityvsdiversity}
      \end{figure}
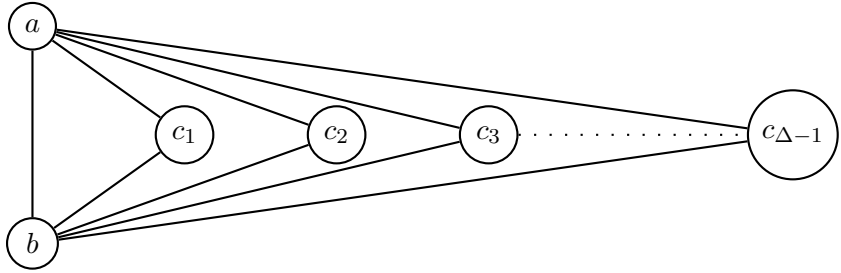
In \Cref{fig:diversityvsdiversity}, we provide an example that shows that this is not necessarily optimal. In the \congestmodel model the seemingly hard problem of triangle detection (see e.g., \cite{LeGall17}) can be reduced to the problem of identifying maximal cliques. Often it is not difficult to compute a clique edge cover with a small diversity, e.g., a cover with diversity two in line graphs can be computed in constant time in the \congestmodel model. For a further discussion of the computability of such covers consult \cite{BEM17}.

A \emph{hypergraph} $H$ is a tuple $(V,E)$ of vertices and edges  and each edge is a set of vertices. The \emph{rank} of a hypergraph is the maximum number of vertices that are contained in an edge.  
One way to define a distributed algorithm on a hypergraph is that, in one round, each vertex $v$ of the hypergraph broadcasts one message on each of its incidents hyperedges $e$ (the messages can be different for different hyperedges but all nodes in the same hyperedge receive the same message from $v$ on that edge) and receives the messages sent by its neighbors.
The diversity of line graphs of hypergraphs is bounded above by the rank of the hypergraph. Simple graphs are (uniform) hypergraphs of rank two and hence have diversity of at most two.

\begin{remark}\label{proof:linegraphofhypergraphdiversity}
The diversity of the line graph of a hypergraph  of rank at most \rank~ is at most \rank~ and a corresponding cover can be computed in constant time.\end{remark}
\begin{proof}Let $H = (V,E)$ be a hypergraph and $L = \mathcal{L}(H)$ its line graph. For each vertex $v$ we define the (constant time computable) clique $C_v = \{e \mid e \in E, v \in e\}$ and $Q=\{C_v \mid v\in V\}$. Then $Q$ is a clique edge cover of $L$ with diversity $\rank$ as an edge $e=\{v_1,\ldots,v_{\rank}\}$ is only contained in the cliques $C_{v_1},\ldots,C_{v_{\rank}}$. \end{proof}



In this section we show how to adapt the proposal technique of \Cref{lem:proposal} in \Cref{sec:rulingedgesetsofsimplegraphs}. Recall that we used the proposal technique to compute $(d,r)$-edge-kernels of a graph. 
 In this section we use vertex-kernels.
\begin{definition}[Vertex-kernel]\label{def:vertexkernel}
    Let $G = (V,E)$ be a graph. A \emph{$(d, r)$-vertex-kernel} $A \subseteq V$ is a subset of nodes, so that the degree of the induced graph $G[A]$ is at most $d$ and for every node $v \in V$ there exists a node $u \in A$ with $\dist_G(v, u) \leq r$.
\end{definition}
First, we  rephrase the proposal technique of \Cref{lem:proposal} directly on the line graph. Each node of the original graph can be identified with a clique in the line graph. The 'proposing an edge' in \Cref{alg:kernelproposal} corresponds to proposing a single node from each such clique. Then, 'accepting a proposed edge' corresponds to accepting one of the proposed nodes in the clique. In \Cref{lem:proposal} we showed that at most two nodes per clique \emph{survive} this process on the line graph. In general many more nodes can survive a single step of this proposal and accepting technique. We repeat the process to sparsify the selected nodes further and further. 

\begin{lemma}\label{proof:diversityproposal}
    There exists an $O(\diversity)$ time algorithm in the \congestmodel model that, given a graph $G=(V,E)$ and a set of cliques $Q$ with diversity $\diversity$,  computes a subset of nodes $\activeNodes$ with the following properties:
    \begin{align*}
        \textit{(small degree)}~ & \text{For all cliques}~ C \in Q : ~ \abs{\activeNodes \cap C} \leq \diversity~,\\
        \textit{(domination)}~ & \text{For all nodes}~ v \in V ~\text{there is a node}~ p \in \activeNodes \text{ with } \dist_G(v,p) \leq \diversity~.
    \end{align*}
	Moreover, $\activeNodes$ is a $(\diversity^2, \diversity)$-vertex-kernel of the graph.
\end{lemma}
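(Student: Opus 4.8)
The plan is to lift the single‑round proposal technique of \Cref{lem:proposal} to a process of $\Theta(\diversity)$ rounds run directly on the clique cover $Q$. I start with $\activeNodes=V$ and repeat the following clique‑analogue of \Cref{alg:kernelproposal}: every clique $C\in Q$ that still contains an active node designates one of its active members as its \emph{representative} and proposes it to all of $C$; then every clique \emph{keeps} one active member, preferring one that was proposed by a \emph{different} clique, and otherwise keeping its own representative. The next active set is the set of kept nodes, and after $\diversity$ rounds I output $\activeNodes$. Before analysing the two properties I would dispose of the final ``moreover'' sentence, which is immediate once the two bounds are in hand: if every clique contains at most $\diversity$ active nodes, then for $v\in\activeNodes$ all of its $\activeNodes$‑neighbours lie in the at most $\diversity$ cliques of $Q$ that contain $v$ (since $Q$ is a clique edge cover, every incident edge sits inside such a clique), each contributing at most $\diversity-1$ further active nodes; hence $\deg_{G[\activeNodes]}(v)\le \diversity(\diversity-1)\le \diversity^{2}$, which together with the domination bound is exactly a $(\diversity^{2},\diversity)$‑vertex‑kernel in the sense of \Cref{def:vertexkernel}.

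The \emph{domination} bound I would prove one round at a time. In a single step every active node $v$ belongs to some clique $C$, and because $C$ is nonempty it keeps one of its active members, which lies in $C$ and is therefore at distance one from $v$ in $G$. Thus any node dominated at distance $t$ before the step is dominated at distance $t+1$ after it, and telescoping over the $\diversity$ rounds (the surviving nodes being at distance $0$ from $\activeNodes$) yields domination at most $\diversity$, as claimed. This is the clique version of the $r\le 2$ estimate in \Cref{lem:proposal}, now charged one shared clique per round rather than two.

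The heart of the argument, and the step I expect to be the real obstacle, is the \emph{small‑degree} bound, namely that after $\Theta(\diversity)$ rounds every clique retains at most $\diversity$ active nodes. A single round does not suffice here, and one must understand why. For $\diversity=2$ the only clique that can let a node $v\in C_{0}$ be accepted through its \emph{other} clique is $C_{0}$ itself, which proposes exactly one node, so at most $1+1=2=\diversity$ nodes of $C_{0}$ survive --- this is precisely the local argument behind \Cref{lem:proposal}. For $\diversity\ge 3$ a node of $C_{0}$ may instead be \emph{proposed} by one of its cliques and \emph{accepted} by a third, and one can exhibit an instance where, under careless proposal choices, \emph{every} node of $C_{0}$ survives a round, so no progress is made at all. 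The plan is therefore to exploit the single global resource of the instance, that each node lies in at most $\diversity$ cliques, via a charging argument built on the fact that the keep‑step is injective (each clique keeps at most one node): the survivors of $C_{0}$ inject both into the cliques that accept a node of $C_{0}$ and into the cliques that propose one. I would attach to every active node the set of cliques that have so far accepted it, argue that a suitable proposal rule forces this set to grow for any node that survives a round, and conclude that after $\diversity$ rounds a survivor has used up all of its at most $\diversity$ cliques; a final count, again combining per‑round injectivity with the at‑most‑$\diversity$‑cliques‑per‑node bound, then caps the survivors inside any clique at $\diversity$. Making this amortised analysis tight --- in particular choosing the proposals so as to eliminate the ``third‑clique'' survivors that break the one‑shot bound --- is the main technical difficulty, in sharp contrast to the purely local one‑round argument that already settles the line‑graph case.
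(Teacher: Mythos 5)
There is a genuine gap, and you have located it yourself: the \emph{(small degree)} bound is never actually proved. Your write-up establishes the domination bound and the derivation of the $(\diversity^2,\diversity)$-vertex-kernel from the two stated properties (both of which match the paper), but for the degree bound you only sketch a charging argument and explicitly defer its key step (``making this amortised analysis tight \ldots is the main technical difficulty''). Worse, the algorithm you propose does not obviously satisfy the bound even for $\diversity=2$: because of your fallback rule ``otherwise keeping its own representative,'' a clique $C_i=\{v_i,u_i\}$ none of whose members was proposed by another clique will keep its own representative $v_i$, and if $v_i$ also lies in a large clique $C_0$, then \emph{all} such $v_i$ survive the round inside $C_0$. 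So your claim that at most $1+1=2$ nodes of $C_0$ survive when $\diversity=2$ already fails, and the difficulty you flag for $\diversity\ge 3$ is present from the start.

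The paper closes exactly this gap with a different mechanism, and the difference matters. In \Cref{alg:diversityproposal} there is no per-clique ``accept'' step: in each of $\diversity$ phases every clique proposes an active node \emph{that it has never proposed before}, and a node is deactivated iff it is not proposed in the current phase but has a neighbor that is. The never-re-propose rule forces the monotonicity $S_1\supseteq S_2\supseteq\cdots\supseteq S_{\diversity}$ of the per-phase proposal sets, which is the structural fact your sketch is missing. From it one argues per clique $C$: either $C$ exhausts its active members (so all survivors of $C$ were proposed by $C$), or in the last phase $C$ proposes some node, every unproposed member of $C$ is thereby deactivated, hence every survivor in $C$ was proposed in phase $\diversity$, hence---by monotonicity---in every phase, hence by $\diversity$ \emph{distinct} cliques; since it lies in at most $\diversity$ cliques including $C$, one of them is $C$ itself, and $C$ proposes at most $\diversity$ nodes in total. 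This is precisely the ``node uses up all of its cliques'' conclusion you were aiming for, but it is obtained by constraining the \emph{proposal} side (no clique repeats a proposal) rather than the \emph{acceptance} side, and it eliminates the third-clique survivors you identify without any amortised charging. I would recommend replacing your accept-based round with the paper's deactivation rule and proving the nestedness property; as written, the central claim of the lemma remains unestablished.
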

\noindent\textbf{Algorithm: }  At the start each node is set \emph{active}. Then, in each of $\diversity$ phases, each clique proposes one of its active nodes  that it has not proposed in any phase before (if such a node exists). Any node that was active before, has not been proposed in the current phase but has a neighbor that is proposed in the current phase is set \emph{inactive}. In the end we return the set of active nodes. Confer \Cref{alg:diversityproposal} for detailed pseudocode.

\begin{algorithm}
    \caption{Proposal Technique for Graphs with Diversity $\diversity$}\label{alg:diversityproposal}
		\begin{tabular}[t]{@{}lll}
            \textbf{Input} &  $Q$ & Clique Cover of $G$ with diversity $\diversity$.\\
		    \textbf{Output} & $\activeNodes$ & $(\diversity^2, \diversity)$-vertex-kernel. \\
            \textbf{Variables} & $\activeNodes$ & set of active nodes, \\
        & $R^C$ & nodes that clique $C$ proposed,	\\
        & $S$ & proposals of the current phase,\\
        & $W$ &  active nodes that are not adjacent to any node of $S$.
	\end{tabular}
    \begin{algorithmic}[1]
		
	\State $R^C = \emptyset$, $S=\emptyset$, $W=\emptyset$
	\State $\activeNodes = V$          \Comment{set all nodes active}
        \For{$\diversity$ \textbf{times}}
        \State $S = \emptyset $ 
        \For{\textbf{each} clique $C \in Q$ \textbf{in parallel}}
            \If{$C \cap \activeNodes \setminus R^C \neq \emptyset$}
                \State propose one node $p^C \in C \cap \activeNodes \setminus R^C$
                \State $R^C = R^{C} \cup \{p^C\}$
                \State $S = S \cup \{p^C\}$
            \EndIf      
        \EndFor
        \State $W = \activeNodes \setminus N_1(S)$\Comment{not-proposed+proposed neighbor $\longrightarrow$ inactive}
        \State $\activeNodes = W \cup S$ \Comment{active nodes for the next phase}
        \EndFor
        \State \Return $\activeNodes$
    \end{algorithmic}
\end{algorithm}
For the correctness of the algorithm we  show that any node in $\activeNodes\cap C$ has been proposed by clique $C$ and as each clique proposes at most one node in each of the $\diversity$ iterations the claim \textit{(small degree)} follows. The second property follows as a node is only set inactive if it has a neighbor that is active in the next phase. 

\begin{proof} 
For $i=1,\ldots,\diversity$ let $\activeNodes_i$ denote the set of nodes that are active at the end of phase $i$, $S_i$  the set of nodes that are proposed in phase $i$, $W_i$  the set of nodes that are active at the end of phase $i$ and do not have a neighbor that is proposed in phase $i$ and $R_i^C$ be the set of nodes that have been proposed by clique $C$ until phase $i$. To prove the lemma we first prove the following property: $(1)$ $S_1\supseteq S_2\supseteq \ldots \supseteq S_{\diversity}~.$ 

 Assume for contradiction, that $v\notin S_j$ and $v\in S_{j+1}$ for some $j<d$. Let $C$ be a clique that proposes $v$ in phase $j+1$. In phase $j+1$ only nodes in $\activeNodes_j$ can be proposed. Thus $v$ is contained in $\activeNodes_j=W_j\cup S_j$. As $v$ is not contained in $S_j$ we deduce that $v\in W_j$, i.e., $v$ does not have a neighbor that is proposed in phase $j$. In particular, $C$ does not propose a neighbor of $v$ in round $j$, i.e., either $C$ proposed $v$ in phase $j$ or $v$ does not propose any node in phase $j$ at all. In both cases $C$ cannot propose $v$ in phase $j+1$, a contradiction.

Fix a clique $C\in Q$. We show that each node in $P \cap C$ has been proposed by $C$. As $C$ proposes at most $\diversity$ nodes the claim \textit{(small degree)} follows. 
   If there is an $i<d$ with $(\activeNodes_i \cap C) \setminus R_i^C=\emptyset$ the claim holds because $\activeNodes\cap C\subseteq \activeNodes_i\cap C$ and clique $C$ already proposed all nodes in $\activeNodes_i\cap C$ in the first $i$ rounds. So assume that $(\activeNodes_{\diversity-1}\cap C) \setminus R_{\diversity-1}^C\neq \emptyset$ and let  $v$ be the node that $C$ proposes in the last phase. All nodes in $C$ that are not proposed in phase $\diversity$ are set inactive as their neighbor $v$ is proposed.  Thus any node in $\activeNodes \cap C$ is a node in $S_{\diversity}$, i.e., any such node is proposed in phase $\diversity$ by some clique and due to Property $(1)$ also in each phase before. Thus any node in $\activeNodes\cap C$ has been proposed by $\diversity$ many cliques. As no clique can propose a node twice and each node is in at most $\diversity$ cliques (including clique $C$) each such node has been proposed by $C$. 
		
    \myparagraph{Domination.}
    At the start every node has an active neighbor (i.e., a neighbor in $\activeNodes$). A node is only set inactive (i.e., removed from $\activeNodes$) in some phase if it has neighbor that is proposed in the phase. Thus the domination distance increases at most by one per phase which proves the claim. 

    \myparagraph{The algorithm runs in $O(\diversity)$ rounds in the \congestmodel model.} 
		In a single phase, removing the non proposed nodes with a proposed neighbor from the set of active nodes can be done in a single round. Thus the runtime is in the order of the number of phases, i.e., it is $O(\diversity)$. 
		
		\myparagraph{$\activeNodes$ is a $(\diversity^2, \diversity)$-vertex-kernel.} Due to diversity $\diversity$ any  $v$ can only be part of $\diversity$ distinct cliques. Due to the \textit{(small degree)} property it has at most $\diversity-1$ neighbors in each clique. As the cliques cover every edge of $G$ this implies that the maximum degree of $G[\activeNodes]$ is upper bounded by $\diversity(\diversity-1)\leq \diversity^2$.
    The domination follows immediately from the second property.
\end{proof}
Analogously to \Cref{proof:kernelrule} one can prove the following lemma.
\begin{lemma}\label{proof:vertexkernelrule}
    Let $d, r_1$ and $r_2$ be positive integers. Given an $(d,r_1)$-vertex-kernel $S\subseteq V$ of a graph \mbox{$G=(V,E)$} and an $r_2$-ruling set algorithm with runtime $T(n,\text{max degree})$ one can compute an \mbox{$(r_1 + r_2)$}-ruling  set of $G$ in time $T(n,d)$.
\end{lemma}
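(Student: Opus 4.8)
\textbf{Proof plan for \Cref{proof:vertexkernelrule}.}

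The plan is to mirror the argument of \Cref{proof:kernelrule} almost verbatim, replacing edges by vertices and the line-graph distance by the ordinary graph distance. The algorithm is simply to run the given $r_2$-ruling set algorithm on the induced graph $G[A]$, where $A\subseteq V$ is the $(d,r_1)$-vertex-kernel; since $G[A]$ has maximum degree at most $d$, this takes time $T(n,d)$. Let $R\subseteq A$ be its output. I claim $R$ is an $(r_1+r_2)$-ruling set of $G$, and there are two things to verify: independence in $G$ and domination at distance $r_1+r_2$ in $G$.

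For \emph{independence}, I would argue exactly as in \Cref{proof:kernelrule}: take any two nodes $u,v\in R\subseteq A$. They are nonadjacent in $G[A]$ because $R$ is an independent set (a ruling set has independence parameter at least $2$) in $G[A]$. If $u$ and $v$ were adjacent in $G$, then since both lie in $A$ the edge $\{u,v\}$ would by definition of the induced subgraph also appear in $G[A]$, contradicting their nonadjacency there. Hence no two nodes of $R$ are adjacent in $G$, so $R$ is independent in $G$.

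For \emph{domination}, the one subtlety to flag is that distances measured inside $G[A]$ can only be larger than distances in $G$, so the composition of the two domination bounds goes through in the right direction. Concretely, let $v\in V$ be arbitrary. By the vertex-kernel property there is a node $f\in A$ with $\dist_G(v,f)\leq r_1$. Since $R$ is an $r_2$-ruling set of $G[A]$, there is a node $g\in R$ with $\dist_{G[A]}(f,g)\leq r_2$, and because $G[A]$ is a subgraph of $G$ we have $\dist_G(f,g)\leq \dist_{G[A]}(f,g)\leq r_2$. The triangle inequality then gives $\dist_G(v,g)\leq \dist_G(v,f)+\dist_G(f,g)\leq r_1+r_2$, as required.

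The argument is entirely routine and contains no real obstacle; the only point worth care is the distance-direction remark above (a path in $G[A]$ is also a path in $G$, never shorter), which is exactly why the kernel's domination $r_1$ and the ruling set's domination $r_2$ add rather than interfere. This is precisely the analogue of the line-graph statement \Cref{proof:kernelrule}, and the excerpt itself notes that the lemma follows "analogously," so I would present the proof as a short direct verification of independence and domination rather than developing any new machinery.
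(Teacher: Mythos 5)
Your proposal is correct and is exactly the argument the paper intends: the paper gives no separate proof for this lemma but states it follows analogously to \Cref{proof:kernelrule}, and your verification of independence and domination on $G[A]$ is the direct vertex analogue of that proof. The one point you flag explicitly --- that $\dist_G(f,g)\leq \dist_{G[A]}(f,g)$ because a path in the induced subgraph is also a path in $G$ --- is implicit in the paper's edge version and is correctly handled here.
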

\Cref{proof:diversityproposal,proof:vertexkernelrule} and the ruling set algorithm from \Cref{proof:RulingSetArbitrarydomination} imply the main result of the section.	

\mainDiversity*
 \begin{proof}
   First use \Cref{proof:diversityproposal} to compute a $(\diversity^2,\diversity)$-vertex-kernel $\activeNodes$ of $G$ in $O(\diversity)$ rounds. Then run the  ruling set algorithm from \Cref{proof:RulingSetArbitrarydomination} with $\beta = 4$ on $G[\activeNodes]$ in $O(\diversity + \log^*n)$. With \Cref{proof:vertexkernelrule} this yields a $(\diversity+4)$-ruling set.
\end{proof}

As hypergraphs of rank $\rank$ have diversity $\rank$ and we can efficiently compute a corresponding clique decomposition and we obtain \Cref{cor:mainHypergraph}.
\mainHypergraph*


\section{Ruling Sets of Simple Graphs}\label{sec:simpleruling}

Awerbuch et al. present in \cite{awerbuch1989network} an algorithm that computes $(\alpha, \alpha \log n)$-ruling sets in $O(\alpha \log n)$ rounds in the \localmodel model.  With similar ideas  Schneider, Elkin and Wattenhofer showed in \cite{schneider2013symmetry} how to compute $(2,\beta)$-ruling sets in $O(\beta \degree^{2/\beta} + \log^* n)$ time in the \localmodel model , i.e., they trade in runtime for a better domination. \cite{awerbuch1989network} has been transformed to the \congestmodel model in \cite{henzinger2016deterministic} and we present an algorithm that levels the parameters of Schneider et al.'s algorithm in the \congestmodel model.

An $\big(\alpha, \beta\big)$-ruling set is a subset of the vertices of a graph such that any node of the graph can reach some vertex of the ruling set in $\beta$ steps and the distances of any two vertices of the ruling set is at least $\alpha$. 
\begin{definition}[Ruling Set]
  An \emph{$\big(\alpha, \beta \big)$-ruling set} $ R \subseteq V$ of a graph $G = (V, E)$ is a subset of nodes such that the distance in $G$ between any two nodes in $R$ is at least $\alpha$ and for every node $v\in V$ there exists a node $u\in R$ with $\dist(v,u)\leq \beta$. $\alpha$ is called the \emph{independence} parameter and $\beta$ the \emph{domination} or \emph{ruling} parameter of the ruling set. As usual we write \emph{$\beta$-ruling} for a $\big(2,\beta\big)$-ruling set.
\end{definition}

\textbf{Algorithm:} The algorithm starts with the trivial dominating set $R=V$ and sparsifies it at the cost of increasing the domination until the distance between any two nodes in the set is at least $\alpha$.  A set $R$ is called \emph{$\alpha$-independent in bit $i$} if two nodes whose binary representation differs in the $i$'s bit  have distance at least $\alpha$. 
To obtain full independence we iterate through the bits of the binary representation of the ids and produce independence in all bits. 
Let $i$ be the position of the bit for which we want to produce independence: Let $U_b[i]$ be the nodes of $G$ that have bit $b$ at the $i$'s position of their bit string. We partition the set $R$ into the nodes $R_0=R\cap U_0[i]$ that have a zero at the $i$'s bit and the nodes $R_1=R\cap U_1[i]$ that have a one at the $i$'s bit. All nodes of $R_0$ remain in $R$ and we remove all nodes of $R_1$ from $R$ that have a node in distance less than $\alpha$ in $R_0$. 
This step can be executed for all nodes in parallel in the \congestmodel model: Start multiple BFS searches (with $\alpha-1$ hops) at the nodes in $R_0$ and whenever  two or more breadth-first searches are conflicting, only a single search, i.e., one with most hops left, is forwarded. 

In \Cref{thm:finalsimple} we show that one step of the algorithm increases the domination parameter by at most $\alpha$. We provide an example of this step in \Cref{fig:examplerulingset}. The runtime and the final domination is $O(\alpha\cdot \log_2 n)$ because the bit strings have length $O(\log_2 n)$. In our algorithm and for some $B\geq 2$ we use the $B$-ary representation of the ids instead of the binary representation. Then the final domination is $O(\alpha\log_B n)$ if we can still increase the domination parameter by at most $\alpha$ for each digit: To handle a single bit we partition $R$ into $B$ sets $R_0=R\cap U_0[i]$, $R_1=R\cap U_1[i]$, $\ldots$, $R_{B-1}=R\cap U_{B-1}[i]$ according to the digit in the $i$'s position. Then we iterate through the $B$ sets $R_0, R_1, R_2,\ldots, R_{B-1}$. All nodes of $R_0$ remain in $R$, we remove the nodes of $R_1$ (from $R_1$ and also from $R$) that have a node in distance less than $\alpha$ in $R_0$. Then all nodes of $R_2$ that still (i.e., after we removed some nodes of $R_1$) have a node in distance less than $\alpha$ in $R_0\cup R_1$ are removed from $R_2$ and so on $\ldots$

The runtime  scales linear in $B$ while the final domination scales as $\log_B n$ with $B$. 
Furthermore, instead of using the id-space to split the node sets we can also use a coloring of $G^{\alpha-1}$.
  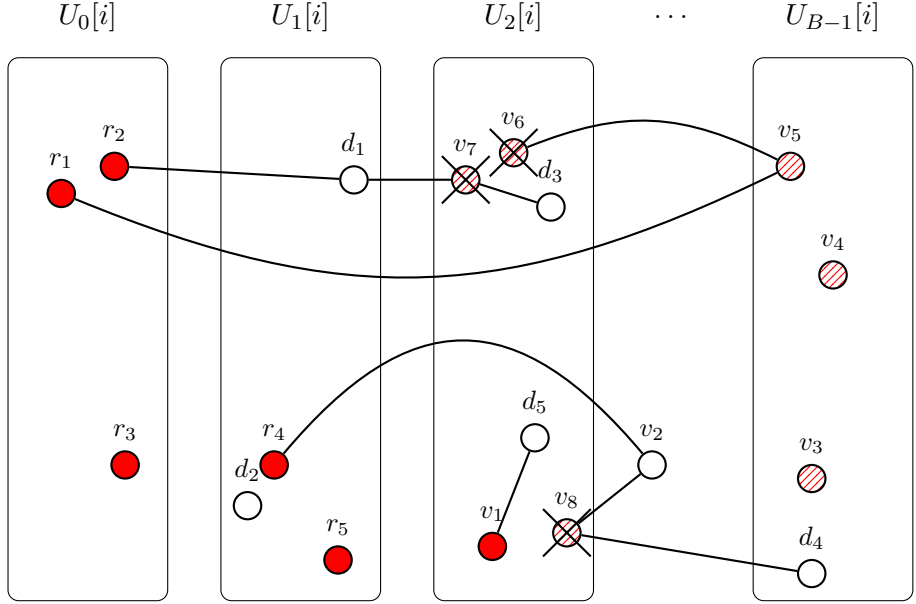
\begin{figure}[H]
    \centering
      \begin{tikzpicture}[xscale=0.7,yscale=1.8]
          \tikzstyle{rulingpath}=[thick]
          \tikzstyle{unconsiderednode}=[draw,circle,thick, pattern=north east lines, pattern color=red]
          \tikzstyle{rulingnode}=[draw,circle,thick,fill=red]
          \tikzstyle{removednode}=[draw,circle,thick]
          \tikzstyle{removesign}=[minimum size=.6cm, draw,thick, cross out]
          \node (U0) at (1.5,8.3) {$U_0[i]$};
          \draw[rounded corners] (0,4) rectangle ++(3,4);

          \node (U1) at (5.5,8.3) {$U_1[i]$};
          \draw[rounded corners] (4,4) rectangle ++(3,4);

          \node (U2) at (9.5,8.3) {$U_2[i]$};
          \draw[rounded corners] (8,4) rectangle ++(3,4);

          \node (Ubminus1) at (15.5,8.3) {$U_{B-1}[i]$};
          \draw[rounded corners] (14,4) rectangle ++(3,4);

          \node (Udots) at (12.5,8.3) {\dots};

          \node[rulingnode, label={\small $r_1$}] (r01) at (1,7) {};
          \node[rulingnode, label={\small $r_2$}] (r02) at (2,7.2) {};
          \node[rulingnode, label={\small $r_3$}] (r03) at (2.2,5.0) {};
          \node[rulingnode, label={\small $r_4$}] (r14) at (5,5) {};
          \node[rulingnode, label={\small $r_5$}] (r15) at (6.2,4.3) {};

          \node[removednode, label={\small $d_1$}] (d11) at (6.5,7.1) {};
          \node[removednode, label={\small $d_2$}] (d12) at (4.5,4.7) {};
					\node[removednode, label={\small $d_3$}] (d13) at (10.2,6.9) {};
					\node[removednode, label={\small $d_4$}] (d14) at (15.1,4.2) {};
					\node[removednode, label={\small $d_5$}] (d15) at (9.9,5.2) {};
          \node[unconsiderednode, label={\small $v_6$}] (v6) at (9.5,7.3) {};
          \node[unconsiderednode, label={\small $v_7$}] (v7) at (8.6,7.1) {};
          \node[unconsiderednode, label={\small $v_8$}] (v8) at (10.5,4.5) {};

          \node[removesign] (remove1) at (v6) {};
          \node[removesign] (remove2) at (v7) {};
          \node[removesign] (remove3) at (v8) {};

          \node[rulingnode, label={\small $v_1$}] (v1) at (9.1,4.4) {};
          
          \node[removednode, label={\small $v_2$}] (v2) at (12.1,5) {};
          \node[unconsiderednode, label={\small $v_3$}] (v3) at (15.1,4.9) {};
          \node[unconsiderednode, label={\small $v_4$}] (v4) at (15.5,6.4) {};
          \node[unconsiderednode, label={\small $v_5$}] (v5) at (14.7,7.2) {};
          \path[rulingpath] (v1) edge (d15);

          \path[rulingpath] (v7) edge (d13);
					\path[rulingpath] (v8) edge (d14);
          
					\path[rulingpath] (r02) edge (d11);
          \path[rulingpath] (d11) edge (v7);

          \path[rulingpath, bend left=25] (r14) edge (v2);
          \path[rulingpath] (v2)  edge (v8);

          \path[rulingpath, bend right=10] (r01) edge (v5);
          \path[rulingpath, bend left=10] (v6) edge (v5);
      \end{tikzpicture}
      \caption{The figure illustrates how bit $i$ is made $3$-independent in  \Cref{alg:krulingwithcolorsgiven}. The figure display some exemplary cases to illustrate the properties of the algorithm. In the presented situation $b=0$ and $b=1$ are already processed. Red nodes represent nodes that remain in $R$ throughout the process of making bit $i$ independent. Nodes that are shaded with red illustrate nodes that still need to be processed.\newline
			$v_6,v_7$ and $v_8$ leave $R$ as they have a neighbor in distance less than three in the set  $R\cap (U_0[i]\cup U_1[i])$.
			The figure illustrates that this distance is measured in $G$ as it does not care whether the red neighbor in distance two or less is reached through  nodes in $U_0[i]\cup U_1[i]$ or nodes in $U_2[i]\cup U_3[i]\cup\ldots \cup U_{B-1}[i]$.\newline
			Node $v_1$ remains in $R$ and $d_5$ is still dominated by it.\newline
            Nodes $d_3$ and $d_4$ are examples for nodes that lost the nodes that dominated them. However, now $d_3$ is dominated by $r_2$ with a larger distance and $d_4$ is dominated by $r_4$ with a larger distance. }\label{fig:examplerulingset}
  \end{figure}

\begin{algorithm}
  \caption{Ruling sets of Simple Graphs}\label{alg:krulingwithcolorsgiven}
  \begin{algorithmic}[1]
    \Require independence parameter $\alpha$, $C$-coloring of $G^{\alpha-1}$, scaling parameter $B \geq 2$
    \Ensure $\big(\alpha, (\alpha-1) \ceil{\log_B(C)})$-ruling set $R$ of $G$
    \Var $U_b[i]$ is the set of vertices with digit $b$ at position $i$ of the $B$-ary representation of their color.

        \State $R = V$
        \For{$i=1,2,\dots, \ceil{\log_B(C)} $} 
					\For{$b=1,2,\ldots, B-1$}
            \For{ \textbf{each} $v \in R\cap U_b[i]$ \textbf{ in parallel}}
                \If{\text{$\exists u\in R\cap \big(U_0[i]\cup \ldots \cup U_{b-1}[i]\big)$ with $\dist_G(u,v)<\alpha$}} \label{alg:crucialLine}
								\State $v$ leaves $R$.
								\EndIf
            \EndFor
        \EndFor
				\EndFor
    \State \Return{$R$}
  \end{algorithmic}
\end{algorithm}
\begin{lemma}\label{thm:finalsimple}
    Let $\alpha$ be a positive integer. For any $B\geq 2$ there exists a deterministic distributed algorithm that, given a $C$-coloring of $G^{\alpha-1}$, computes a $\big(\alpha, (\alpha-1) \ceil{\log_B(C)}\big)$-ruling set of $G$ in $O(B\cdot \alpha \cdot \log_B{C})$ rounds in the $\congestmodel$ model.
\end{lemma}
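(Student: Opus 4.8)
The plan is to verify separately the three properties claimed for the set $R$ returned by \Cref{alg:krulingwithcolorsgiven}: that it is $\alpha$-independent, that it dominates within distance $(\alpha-1)\ceil{\log_B C}$, and that it can be computed in $O(B\cdot\alpha\cdot\log_B C)$ rounds of \congestmodel. The backbone of both correctness arguments is a single structural observation together with the fact that the algorithm only ever deletes nodes from $R$. The observation is that, within a fixed outer iteration $i$, a node is only a deletion candidate during the inner iteration whose value $b$ equals that node's digit at position $i$; hence once the inner loop has advanced past a node's own digit value, that node can no longer leave $R$ during phase $i$ (in particular, digit-$0$ nodes are never deleted in phase $i$). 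Monotonicity of deletions means any invariant of the form ``every pair of remaining nodes differing in digit $i$ is at distance at least $\alpha$'' is preserved by all subsequent phases.

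\emph{Independence.} Fixing a position $i$, I would first show that right after its inner double loop finishes, any two $u,v\in R$ whose $B$-ary color representations differ at position $i$ satisfy $\dist_G(u,v)\geq\alpha$. Suppose $u$ carries the smaller digit $a<b=\mathrm{digit}(v)[i]$ and both survive phase $i$; by the observation above $u$ is already present and immune to deletion when value $b$ is processed, so $u\in R\cap(U_0[i]\cup\dots\cup U_{b-1}[i])$ at that moment. Then $v$ would have been removed in \Cref{alg:crucialLine} if $\dist_G(u,v)<\alpha$, a contradiction. By monotonicity this property holds at every position in the final $R$, so for two distinct final nodes I finish by cases: if they share a color they are non-adjacent in $G^{\alpha-1}$ by properness of the supplied coloring, hence at distance at least $\alpha$; otherwise they differ in some digit and the invariant applies.

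\emph{Domination.} I would induct on $i$ to show that after the $i$-th outer iteration every node of $V$ has a node of the current $R$ within distance $i(\alpha-1)$, the base case $i=0$ being immediate from $R=V$. The step rests on the sub-claim that any $w\in R$ present before phase $i$ has some node of $R$ after phase $i$ within distance $\alpha-1$: if $w$ survives, take $w$ itself; if $w$ is deleted while value $b=\mathrm{digit}(w)[i]$ is processed, its witness $u\in R$ has digit $a<b$ and $\dist_G(u,w)\leq\alpha-1$, and by the structural observation $u$ is immune to deletion for the rest of phase $i$, so $u$ survives. Chaining the inductive hypothesis with this sub-claim through the triangle inequality raises the domination radius by at most $\alpha-1$ per digit, giving $(\alpha-1)\ceil{\log_B C}$ after all $\ceil{\log_B C}$ digits.

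\emph{Runtime.} The only non-local operation is the distance test of \Cref{alg:crucialLine}, and I expect this to be the main obstacle, since a candidate node may have up to $\degree^{\,\alpha-1}$ nodes within distance $\alpha-1$ and many concurrent searches could otherwise congest a single edge. I would realize the test by launching a depth-$(\alpha-1)$ breadth-first exploration in $G$ from all nodes of $R\cap(U_0[i]\cup\dots\cup U_{b-1}[i])$ at once, where each message carries only its remaining hop budget (an $O(\log n)$-bit integer) and each node forwards only the received search with the largest remaining budget, breaking ties arbitrarily. A routine induction on the hop count shows a node is reached with non-negative budget exactly when some source lies within $\alpha-1$ hops, so every node of $U_b[i]$ decides correctly whether to leave $R$; as each node forwards at most one message per round, the exploration finishes in $\alpha-1$ rounds within the bandwidth bound. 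Repeating this for the $B-1$ values of $b$ across the $\ceil{\log_B C}$ digits yields the claimed $O(B\cdot\alpha\cdot\log_B C)$ round complexity in the \congestmodel model.
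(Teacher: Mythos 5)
Your proposal is correct and follows essentially the same route as the paper: the same monotone-deletion algorithm, the same per-digit argument that independence in each processed digit is established and preserved, the same induction showing the domination radius grows by at most $\alpha-1$ per digit, and the same depth-$(\alpha-1)$ BFS with per-node aggregation (your ``largest remaining hop budget'' is the paper's ``minimum distance value'') to implement the distance test in \congestmodel. The only difference is that you make explicit a step the paper leaves implicit, namely that the witness node causing a deletion has a strictly smaller digit and is therefore immune to deletion for the remainder of that phase, so it indeed survives to dominate the deleted node.
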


\begin{proof}
    Assume we are given the graph $G = (V, E)$ and a $C$-coloring of $G^{\alpha-1}$. 
   To show that \Cref{alg:krulingwithcolorsgiven} runs in the \congestmodel model we only need to show how to execute \Cref{alg:crucialLine}, \Cref{alg:krulingwithcolorsgiven} (in $O(\alpha)$ rounds) in the \congestmodel model. We introduce a distance variable $d(v)$ for vertex $v \in V$. All nodes in $R\cap \big(U_0[i]\cup \ldots \cup U_{b-1}[i]\big) $ initialize $d(v)=0$ and all other nodes set $d(v)=\infty$. Then, in $\alpha-1$ iterations each node sends its $d(v)$ value to all of its neighbors and thereafter sets  $d(v) = \min ~ \{d(v)\} \cup \{d(u) + 1 \mid u \in N^1(v) \}$. The runtime for one execution of \Cref{alg:crucialLine} is $O(\alpha)$ which implies that the total runtime is  $O(\alpha \cdot B\cdot \log_B{C})$.
	\myparagraph{$R$ is independent in $G^{\alpha-1}$.} Assume that there are two vertices $u\neq v \in R$ with $\dist(u,v) < \alpha$. Let $i$ be the lowest digit in which their representation differs which exists as the coloring is unique up to distance $\alpha-1$. Without loss of generality let $u$ have the lower value $l$ in the $i$'s position and $v$ the higher $h$. Then in iteration $i$ the node $v$ has been removed from $R$ as $u\in R\cap U_l[i]$ and $v\in R\cap U_h[i]$, a contradiction.

    \myparagraph{$R$ is at least $(\alpha-1) \cdot  \ceil{\log_B(C)}$ dominating.} Before the first iteration we have $R=V$ and $R$ dominates all nodes with distance zero. To prove the claim we show that the domination parameter increases by at most an additive $\alpha-1$ in each iteration.
		So, assume that before some iteration the domination is $\beta$, let $R'$ denote the set $R$ before the iteration, $R$ the set after the iteration and let $v\in V$ be an arbitrary node. Then there exists a node $u\in R'$ with $\dist_G(u,v)\leq \beta$. If $u\in R$ the node $v$ is dominated with distance $\beta\leq \beta+\alpha-1$. If $u\notin R$ there is a node $w\in R$ with $\dist_G(u,w)\leq \alpha -1$. Thus $\dist_G(v,w)\leq \dist_G(v,u)+\dist_G(u,w)\leq \beta+\alpha -1$.
\end{proof}
Unique \id s are a valid coloring of $G^{\alpha-1}$ for any $\alpha$ which implies \Cref{thm:mainSimpleGraph}.
\mainSimpleGraph*
\mainSimpleGraphDelta*
\begin{proof}
  First compute a $c\Delta^2$-coloring of $G$ with Linial's algorithm \cite{linial1987distributive}.
    Thereafter use the $\big(2, \ceil{\log_B c\cdot \degree^2}\big)$-ruling set algorithm from \Cref{thm:mainSimpleGraph} with $B = c\cdot \degree^{2/\beta}$ to compute a $\big(2, \beta \big)$-ruling set.
    The runtime is $O(\degree^{2/\beta} \cdot \log_{c\degree^{2/\beta}} c\Delta^2 +\logstar n)  = O(\degree^{2/\beta} \cdot \beta +\logstar n)$~.
\end{proof}



\clearpage
\bibliographystyle{alpha}
\bibliography{references}

\end{document}